\newif\ifarxiv
\title{Separating Automatic Relations\ifarxiv\thanks{
    This is the full version of a paper accepted at MFCS'23 \cite{thispaperMFCS}.
    Licensed under \href{https://creativecommons.org/licenses/by/4.0/}{CC BY 4.0}.
}\fi%
}
\titlerunning{Separating Automatic Relations}
\author{Pablo Barcel\'o}{Institute for Mathematical and Computational Engineering, Universidad Cat\'olica de Chile \& \\ CENIA \& IMFD, Chile \and\url{https://www.pbarcelo.ing.uc.cl}}{pbarcelo@uc.cl}{https://orcid.org/0000-0003-2293-2653}{}
\author{Diego Figueira}{Univ. Bordeaux, CNRS, Bordeaux INP, LaBRI, UMR5800, F-33400 Talence, France\and
\url{https://www.labri.fr/perso/dfigueir/}}{diego.figueira@cnrs.fr}{https://orcid.org/0000-0003-0114-2257}{}
\author{Rémi Morvan}{Univ. Bordeaux, CNRS, Bordeaux INP, LaBRI, UMR5800, F-33400 Talence, France\and\url{https://www.morvan.xyz/}}{remi.morvan@u-bordeaux.fr}{https://orcid.org/0000-0002-1418-3405}{}
\authorrunning{P. Barceló, D. Figueira, and R. Morvan}
\keywords{Automatic relations, recognizable relations, separability, finite colorability}
\theoremstyle{claimstyle}
\newtheorem{fact}[theorem]{Fact}
\Crefname{claim}{Claim}{Claims}
\newcommand{\lBrack}{\llbracket}
\newcommand{\rBrack}{\rrbracket}
\definecolor{Desire}{HTML}{eb3b5a} 
\definecolor{Boyzone}{HTML}{2d98da} 
\definecolor{NYC Taxi}{HTML}{f7b731} 
\definecolor{Algal Fuel}{HTML}{20bf6b} 
\definecolor{Innuendo}{HTML}{a5b1c2} 
\colorlet{cBlue}{Boyzone}
\colorlet{cYellow}{NYC Taxi}
\colorlet{cGreen}{Algal Fuel}
\colorlet{cRed}{Desire}
\colorlet{cGrey}{Innuendo} 
\definecolor{Dark Ruby Red}{HTML}{580507}
\definecolor{Dark Blue Sapphire}{HTML}{053641}
\definecolor{Dark Gamboge}{HTML}{be7c00}
\newcommand{\fullversionarxivurl}{https://arxiv.org/abs/2305.08727}
\renewcommand{\epsilon}{\varepsilon}
\newif\ifproofappendix
\newrobustcmd\labelwithproof[1]{%
\AP\label{#1}%
\ifproofappendix%
\marginnote{\footnotesize{%
  \textnormal{First stated in page~\pageref{#1}.}%
}}
\else%
  \ifarxiv%
    \marginnote{\footnotesize{%
        \textnormal{See the proof of \Cref{#1} in page~\pageref{proof-#1}.}%
    }}%
  \fi%
\fi%
}
\newrobustcmd\introinrestatable[1]{%
\ifproofappendix%
\kl{#1}%
\else%
\intro{#1}%
\fi%
}
\newrobustcmd\introinrestatableopt[1]{%
\ifproofappendix%
\kl[#1]{#1}%
\else%
\intro[#1]{#1}%
\fi%
}
\newenvironment{proofappendix}[2]
  {
    \proofappendixtrue%
    #2*
    \proofappendixfalse%
    \begin{proof}[Proof of \Cref{#1}]
      \AP\label{proof-#1}
  }
  { 
    \end{proof}
  }
\newrobustcmd\recall[1]{
  \proofappendixtrue%
    #1*
  \proofappendixfalse%
}
\newrobustcmd\Crefapdx[1]{%
  \Cref{apdx:#1}%
}
\definecolor{green}{RGB}{0,120,0}
\definecolor{hlyellow}{RGB}{250, 250, 190}
\definecolor{diegoeditcolor}{RGB}{210,210,255}
\definecolor{remieditcolor}{RGB}{210,255,210}
\newcommand{\sideremi}[1]{\todo[backgroundcolor=remieditcolor, size=\tiny]{{\bf R:} #1}}
\newcommand{\remi}[1]{\todo[inline,color=remieditcolor, size=\footnotesize]{{\bf R:} #1}}
\newcommand{\diego}[1]{\todo[inline,color=diegoeditcolor, size=\footnotesize]{{\bf D:} #1}}
\definecolor{light-gray}{gray}{0.9}
\newcommand{\proofcase}[1]{\noindent\colorbox{light-gray}{#1}~~}
\newrobustcmd{\wrote}{\color{wrote}\scriptsize\text{wrote}}
\newrobustcmd{\advised}{\color{advised}\scriptsize\text{advised}}
\renewcommand{\phi}{\varphi}
\renewcommand{\leq}{\leqslant}
\renewcommand{\geq}{\geqslant}
\renewcommand{\emptyset}{\varnothing}
\newcommand{\set}[1]{\{#1\}}
\newcommand{\tup}[1]{\langle#1\rangle}
\newrobustcmd{\defeq}{\mathrel{\hat{=}}}
\newcommand{\+}[1]{\mathcal{#1}}
\newrobustcmd{\N}{\mathbb{N}}
\newcommand{\dcup}{\mathrel{\dot\cup}} 
\newrobustcmd\pset[1]{\wp(#1)} 
\knowledgenewrobustcmd{\A}{\mathbb{A}} 
\knowledgenewrobustcmd{\Init}{\cmdkl{\textrm{Init}}}
\newrobustcmd{\InitC}[1]{\kl[\Init]{\color{#1}\textrm{Init}}} 
\knowledgenewrobustcmd{\Reach}{\cmdkl{\textrm{Reach}}}
\newrobustcmd{\ReachC}[1]{\kl[\Reach]{\color{#1}\textrm{Reach}}} 
\knowledgenewrobustcmd{\incompGraph}[2]{\cmdkl{\+{Inc}_{#1,#2}}}
\knowledgenewrobustcmd{\compL}{\cmdkl{\textnormal{\small(\textsc{comp}${}_\ell$)}}}
\knowledgenewrobustcmd{\compR}{\cmdkl{\textnormal{\small(\textsc{comp}${}_\textit{r}$)}}}
\knowledgenewrobustcmd{\compLpr}{\cmdkl{\textnormal{\small(\textsc{comp}${}'_\ell$)}}}
\knowledgenewrobustcmd{\compRpr}{\cmdkl{\textnormal{\small(\textsc{comp}${}'_\textit{r}$)}}}
\knowledgenewrobustcmd{\Id}{\cmdkl{\mathit{Id}}}
\knowledgenewrobustcmd{\AutGraph}[2]{\cmdkl{\langle#1,#2\rangle}}%
\knowledgenewrobustcmd{\configs}[1][T]{\cmdkl{\textit{Confs}_{#1}}}
\knowledgenewrobustcmd{\confGraph}[1][T]{\cmdkl{{\+G}^{#1}}}
\knowledgenewrobustcmd{\REC}{\cmdkl{\textsc{Rec}}}%
\knowledgenewrobustcmd\kREC[1][k]{\cmdkl{#1\textsc{-Rec}}}
\knowledgenewrobustcmd\kPROD[1][k]{\cmdkl{#1\textsc{-Prod}}}
\newrobustcmd{\change}[1]{\textcolor{cGreen}{#1}} 
\begin{document}

\maketitle

\begin{abstract}
    We study the separability problem for automatic relations (i.e., relations on finite words definable by synchronous automata) in terms of 
    recognizable relations (i.e., finite unions of products of regular languages). This problem takes as input two automatic relations 
    $R$ and $R'$, and asks if there exists a recognizable relation $S$ that contains $R$ and does not intersect $R'$. We show this problem to 
    be undecidable when the number of products allowed in the recognizable relation is fixed. In particular, checking if there exists 
    a recognizable relation $S$ with at most $k$ products of regular languages that separates $R$ from $R'$ is undecidable, for each fixed $k \geq 2$. 
    Our proofs reveal tight connections, of independent interest, between the separability problem and the finite coloring problem for automatic graphs, where colors are regular languages.
\end{abstract}

\medskip

{\small
\noindent
\AP
\raisebox{-.4ex}{\HandRight}\ \ This pdf contains internal links: clicking on a "notion@@notice" leads to its ""definition@@notice"".\footnote{This result was achieved by using the "knowledge" package and its companion tool "knowledge-clustering".}
}

\section{Introduction} 

\subparagraph*{Context.}
The study of classes of relations on words has become an important topic in language theory
\cite{EM,Nivat,Berstel,FS93,Choffrut-survey}, and also in areas such as databases and verification where they are used to 
build expressive languages. For instance, classes of relations of this kind are relevant for querying strings over relational 
databases \cite{BLSS03}, comparing paths in graph databases \cite{BarceloLLW12}, or defining  
string constraints for model checking \cite{LB16}. 
The most studied such classes include {\em recognizable}, {\em automatic}, and {\em rational} relations, each one of the latter two 
strictly extending 
the previous one. 
\AP
""Rational relations"" are those definable by multi-head automata, with heads possibly moving asynchronously; 
"automatic relations" are "rational relations" that are accepted by multi-head automata whose heads are forced to move synchronously; and 
"recognizable relations" correspond to finite unions of products of regular languages (or, equivalently, to languages recognized via finite monoids, by Mezei's Theorem). By definition, all of these classes coincide with the 
class of regular languages when restricted to unary relations.   

\AP
Prior work has focused on the ""$\REC$-definability problem"", 
which takes as input an $n$-ary "rational relation" $R$ and asks
whether it is equivalent to a "recognizable relation"
$\bigcup_i L_{i,1} \times \cdots \times L_{i,n}$, where each $L_{i,j}$ is a regular language. 
Intuitively, the problem asks whether the different components of  
the "rational relation" $R$ are almost independent of one another.
The study of "$\REC$-definability"
is relevant since relations enjoying this property are often amenable to some analysis including, e.g.,
abstract interpretations in program verification, variable elimination in constraint logic programming, and 
query processing over constraint databases (see the introduction of \cite{ICALP19} for a thorough discussion on this topic).

In general, "$\REC$-definability" of "rational relations" is undecidable, but it becomes decidable for two important subclasses: 
deterministic "rational relations" and "automatic relations". 
For deterministic "rational relations",  "$\REC$-definability" has been shown to be decidable in double-exponential time for binary relations by Valiant \cite{Valiant75}---improving Stearns's triple-exponential bound \cite{Stearns}. The decidability result was later extended to relations of arbitrary arity by Carton, Choffrut and Grigorieff~\cite[Theorem 3.7]{CCG06}.
For "automatic relations", the decidability of "$\REC$-definability"
can be obtained by a  simple reduction to the problem of checking whether a finite automaton recognizes an infinite language~\cite{DBLP:journals/dmtcs/LodingS19} -- which is decidable via a standard reachability argument. 
The precise complexity of the problem, however, was only recently pinned down. By applying 
techniques based on Ramsey Theorem over infinite graphs, it was shown that "$\REC$-definability" of "automatic relations" is 
"PSpace"-complete when
 relations are specified by non-deterministic automata \cite[Theorem 1]{ICALP19} \cite[Corollary 2.9]{BGLZ22}.

\AP On the other hand, much less is known about the ""$\REC$-separability problem"", which takes two $n$-ary "rational relations" 
$R,R' \subseteq \A^* \times \A^*$ and checks whether there is a "recognizable relation" 
$S = \bigcup_i L_{i,1} \times \cdots \times L_{i,n}$
with $R \subseteq S$ and $R' \cap S = \emptyset$. In other words, this problem asks whether we can {\em overapproximate} $R$ with a recognizable 
relation $S$ that is constrained not to intersect with $R'$. Separability problems of this kind abound in theoretical computer science, in particular
in formal language theory where they have gained a lot of attention over the last few years ---see, e.g., \cite{PZ14,K16,CMRZZ17,CCLP17}. 

\remi{The following paragraph is not very clear.}
As for definability, the "$\REC$-separability problem" for "rational relations"
is in general undecidable.
In this paper we focus on the separability problem for "automatic relations", that is, the restriction of the "$\REC$-separability
problem" defined above to the case when both $R$ and $R'$ are "automatic relations". 
Notice that when $R'$ is the complement of $R$ this problem boils down to "$\REC$-definability". 
However, "$\REC$-separability" for "automatic relations" is more general than "$\REC$-definability", and to this day it is unknown whether 
it is decidable. 

\subparagraph*{Main contributions and technical approach.}
While we do not solve the separability problem for "automatic relations", 
we report on some significant progress in our understanding of the problem.
We start by establishing a tight connection between "$\REC$-separability" and the colorability problem 
for ``"automatic graphs"'', which may shed some light on the difficulty of the former problem.
An "automatic graph" \cite{BG00,IKR02,DBLP:conf/ifipTCS/KuskeL08,KL10}\sideremi{I don't understand these citations, what are they doing here?} is an infinite graph defined on a regular set of finite words, 
whose edge set is described by a binary "automatic relation". The "regular colorability problem" is then the problem of checking if a given "automatic graph" is finitely colorable, with the restriction that each color forms a regular language.
Concretely, we show that the "$\REC$-separability problem" for binary "automatic relations" is equivalent, under polynomial time reductions, to the "regular colorability problem".
\AP Moreover, we introduce a hierarchy $(\kREC)_{k > 0}$ of "recognizable relations"
so that the coloring problem, when restricted to $k>0$ colors---called "$k$-regular colorability problem"--- reduces to the separability problem by relations of $\kREC$. Concretely:

\begin{restatable*}{theorem}{regcolorabilityequivseparability}
    \AP\label{thm:reg-colorability-equiv-separability}
    There are polynomial-time reductions: 
    \begin{enumerate}
        \item from the "$\REC$-separability problem" to the "regular colorability problem"; 
        \item from the "regular colorability problem" to the "$\REC$-separability problem"; and
        \item from the "$k$-regular colorability problem" to the $\kREC$-"separability problem", for every $k > 0$.
    \end{enumerate}
    Further, the last two reductions are so that the second relation in the instance of the "separability problem" is the identity $\Id$.
\end{restatable*}
The "regular colorability problem" seems challenging, and in particular we lack tools for establishing that an "automatic graph" is finitely colorable; let alone checking that said colors define regular sets. 
%
On the other hand, it is easy to see that the "$k$-regular colorability problem"
is undecidable for each fixed $k > 1$ if we lift the restriction that colors define regular sets, i.e., 
checking if an "automatic graph" admits a $k$-coloring---this has been proved in an unpublished thesis by 
K\"ocher \cite[Proposition 6.5]{Kocher}.
To be more precise, the problem is even co-recursively enumerable-complete\footnote{The upper bound
follows from "De Bruijn–Erdős Theorem".}.
We establish that this undecidability holds even with the restriction on colors being regular sets:     

\begin{restatable*}{theorem}{kregcolundec}
    \AP\label{thm:k-reg-col-undec}
    The {\sc"$k$-regular colorability problem"} on "automatic graphs" is undecidable, for every $k\geq 2$. More precisely, the problem is recursively enumerable-complete. This holds also for connected "automatic graphs".
\end{restatable*}

Note that the definitions of "$k$-regular colorability problem"
and "$k$-colorability problem" look similar, and are both undecidable,
but the former is "RE"-complete while the latter is "coRE"-complete.

\AP
By reduction from the "$k$-regular colorability problem" we obtain an important consequence for our separability problem:  
It is undecidable to check if two "automatic relations" can be separated by a "recognizable relation"
defined by a {\em fixed} number of unions of products of regular languages. More specifically, fix $k > 0$ 
and define $\intro*\kPROD$ as the class of "recognizable relations" of the form $S = \bigcup_{1 \leq i \leq k} L_{i,1} \times \cdots \times L_{i,n}$---this hierarchy is intertwined with
the $(\kREC)_{k > 0}$ hierarchy introduced previously. We 
show that the $\kPROD$-"separability problem", "ie", the problem of checking separability for binary "automatic relations" $R$ and $R'$ in terms of a recognizable relation $S$ in the class 
$\kPROD$, is undecidable for any $k \geq 2$.
\begin{restatable*}{theorem}{kprodundecidable}
    \AP\label{thm:kprod-undecidable}
    The $\kPROD$-"separability problem" is undecidable, for every $k \geq 2$.
\end{restatable*}

At this point, a natural question is whether our choice of restricting the study to the class $\kPROD$, for fixed $k > 1$, is not too strong, in the sense that 
it turns undecidable not only the separability but also the \emph{definability} problem for "automatic relations". We show that this is not the case; in fact, by using a 
simple adaptation of the proof techniques in \cite{ICALP19} we can show that the problem of checking if an automatic relation can be expressed 
as a relation in $\kPROD$, for any fixed $k > 0$, is decidable in single-exponential time:
\begin{restatable*}{corollary}{kproddefdec}
    The $\kPROD$-"definability problem" is decidable, for every $k > 0$.
\end{restatable*}

\subparagraph*{Remark.}
For simplicity, we focus on binary "automatic relations" only. Extending the decidability results to $n$-ary automatic relations, for $n > 2$ is direct by applying tools in \cite{ICALP19}. 

%
%
%

\section{Preliminaries}

\subparagraph*{Automatic and recognizable relations.}
Let $\A$ be a finite alphabet. We write $\A_\bot$ for the 
extension of $\A$ with a fresh symbol $\bot$. 
Given a pair $(w_1,w_2) \in \A^* \times \A^*$, 
we write $w_1 \otimes w_2$ for the word over alphabet 
$\A_\bot \times \A_\bot$ that is obtained as follows: first, padding the shorter word with $\bot$'s until both words are of the same length, 
and then reading the two words synchronously as if they were a single word over a binary alphabet. For example, if $w_1 = aaba$ and $w_2 = aa$, then 
$w_1 \otimes w_2 = (a,a)(a,a)(b,\bot)(a,\bot)$. 
For any relation $R \subseteq \A^* \times \A^*$, let us write
\knowledgenewrobustcmd{\lotimes}[1]{\cmdkl{\otimes}#1}%
\AP
$\intro*\lotimes R$ to denote the set 
\[\lotimes R \ \defeq \ \set{u \otimes v \mid (u,v) \in R} \ \subseteq \ (\A_\bot \times \A_\bot)^*.\]
We then have the following:  
\AP
\begin{itemize} 
\item 
$R \subseteq \A^* \times \A^*$ is an ""automatic relation"" if{f} $\lotimes R$ is a regular language;
\item $R \subseteq \A^* \times \A^*$ is a ""recognizable relation"" if{f}
$R = \bigcup_{i=1}^n A_i \times B_i$,
where $n \in \N$ and all the $A_i$'s and $B_i$'s are regular languages over $\A$. 
\end{itemize} 
\AP
We denote by $\intro*\REC$ the class of all "recognizable relations".

\begin{example} 
For any fixed constant $c > 0$, the relation $R$ composed by all pairs of words of the form $(a^n,a^{n+c})$, for $n \geq 0$, is "automatic". In turn, $R$ is not "recognizable". An example of a non-"automatic relation" is the one consisting of all pairs of the form $(a^n,a^{d \cdot n})$, for $n > 0$, for any constant $d > 1$.  \qed
\end{example} 

\subparagraph*{Separability.}\AP
Let $R$ and $R'$ be "automatic relations" over an alphabet $\A$. A "recognizable relation" $S$ over $\A$ 
""separates"" $R$ from $R'$ 
if $R \subseteq S$ and $R' \cap S = \emptyset$. 

\begin{example} 
Consider the "automatic relations" $R = \{(a^n,a^{n+1}) \mid n\geq 0\}$ and $R' = \{(a^n,a^{n+2}) \mid n\geq 0\}$. 
They are "separable" by the "recognizable relation" 
\[S = (A_{\rm even} \times A_{\rm odd}) \, \cup \, (A_{\rm odd} \times A_{\rm even}),\] 
where $A_{\rm even}$ and $A_{\rm odd}$ are the regular languages $(aa)^*$ and $a(aa)^*$, respectively. \qed 
\end{example}

We study the following separability problem, for a class $\cal C$ of "recognizable relations". 
\AP
\begin{center}
\fbox{\begin{tabular}{rl}
{\textbf{Problem}}: & $\cal C$-""separability problem"" \\
{\textbf{Input}}: & "Automatic relations" $R$ and $R'$ over $\A$ \\   
{\textbf{Question}}: & Is there a "recognizable relation" in $\cal C$ over $\A$ 
that "separates" $R$ from $R'$? 
\end{tabular}} 
\end{center}   
\AP
We also consider the $\cal C$-""definability problem"", 
which takes as input an automatic relation $R$ and asks if there is a recognizable relation $S$ in $\cal C$ with $S = R$. 
It is easy to see that the $\cal C$-"definability problem" corresponds to an instance
of the $\cal C$-"separability problem". 

\begin{fact}
    For any class $\cal C$ of recognizable relations, the $\cal C$-"definability problem" is Turing-reducible to the $\cal C$-"separability problem".
\end{fact}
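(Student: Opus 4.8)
The plan is to show that the $\cal C$-definability problem reduces to the $\cal C$-separability problem via a Turing reduction. Given an automatic relation $R$ over $\A$ as input to the definability problem, I want to construct an instance $(R, R')$ of the separability problem such that a positive answer to separability, combined with a single additional check, yields the answer to definability.

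The key observation is that $R$ is $\cal C$-definable if{f} there is a recognizable relation $S \in \cal C$ with $S = R$, whereas $R$ is $\cal C$-separable from $R'$ if{f} there is some $S \in \cal C$ with $R \subseteq S$ and $S \cap R' = \emptyset$. The natural choice is to take $R'$ to be the complement of $R$, namely $R' \defeq (\A^* \times \A^*) \setminus R$. Since $R$ is automatic, its complement is also automatic (automatic relations are closed under complementation, as $\lotimes R$ being regular implies its complement within the set of valid $\otimes$-encodings is regular), so $R'$ is a legitimate instance. With this choice, any $S$ separating $R$ from $R'$ satisfies $R \subseteq S$ and $S \cap \bigl((\A^* \times \A^*) \setminus R\bigr) = \emptyset$, which forces $S \subseteq R$, hence $S = R$. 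Conversely, if $R$ itself lies in $\cal C$, then $S = R$ separates $R$ from $R'$. Thus $R$ is $\cal C$-definable if{f} $(R, R')$ is a positive instance of the $\cal C$-separability problem, giving the reduction directly.

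The steps, in order, are: first compute $R'$ as the automatic relation encoding the complement of $R$, which is effective from the automaton for $\lotimes R$; second, invoke the separability oracle on $(R, R')$; third, return its answer. No further post-processing is needed, so in fact this is a many-one reduction, which is stronger than the Turing reduction claimed. The only subtlety is ensuring that $R'$ is genuinely automatic and computable from $R$, which is the point requiring the closure of automatic relations under complement.

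The main potential obstacle is purely the closure argument for complementation within the $\otimes$-encoding: one must be careful that $\lotimes R'$ is exactly the set of well-formed encodings $u \otimes v$ not belonging to $\lotimes R$, rather than the raw complement of $\lotimes R$ in $(\A_\bot \times \A_\bot)^*$, which would include ill-formed words having $\bot$'s in the wrong positions. Intersecting the complement of $\lotimes R$ with the regular language of valid $\otimes$-encodings resolves this, and regularity is preserved throughout. Everything else is routine, and the reduction is immediate once $R'$ is in hand.
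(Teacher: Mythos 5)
Your proposal is correct and is essentially identical to the paper's proof: the paper also reduces the instance $R$ of the definability problem to the instance $\bigl(R,\, (\A^*\times \A^*) \setminus R\bigr)$ of the separability problem. Your additional remarks---that any separator must then equal $R$, that the complement of an automatic relation is effectively automatic, and that the reduction is in fact many-one---are all accurate elaborations of the same one-line argument.
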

\begin{proof}
    Reduce an instance $R$ of the "definability problem" to the instance
    $(R,\, (\A^*\times \A^*) \setminus R)$ of the "separability problem".
\end{proof}

The following is known regarding the complexity of the $\REC$-definability problem. 

\begin{proposition} {\em \cite[Theorem 1]{ICALP19}} 
The $\REC$-"definability problem" for "automatic relations" specified by non-deterministic automata 
is "PSpace"-complete. 
\end{proposition}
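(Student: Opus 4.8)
The plan is to reduce the question of whether an \emph{automatic relation} $R\subseteq\A^*\times\A^*$ is \emph{recognizable} to a finiteness condition testable in polynomial space, and to prove matching hardness by a gadget reduction. The starting point is a semantic characterization in the spirit of Mezei's theorem: writing $R_u \defeq \{v : (u,v)\in R\}$ for the section of $R$ at $u$, I claim that $R$ is recognizable if and only if the family $\{R_u : u\in\A^*\}$ is finite. Indeed, if there are finitely many sections $B^{(1)},\dots,B^{(m)}$, then $R=\bigcup_{j} A^{(j)}\times B^{(j)}$ with $A^{(j)}\defeq\{u : R_u=B^{(j)}\}$; for an automatic $R$ each $B^{(j)}$ is a section of an automatic relation, hence regular, while each $A^{(j)}$ is regular because the map $u\mapsto R_u$ factors through the behaviour of the convolution automaton. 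The converse is immediate, since a product decomposition has boundedly many sections. Everything thus reduces to deciding whether an automatic relation has \emph{infinitely many distinct sections}.

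For the upper bound I would fix an NFA $\mathcal{A}$ recognising $\lotimes R$ and analyse how $R_u$ depends on $u$. The obstacle is that the summary of $u$ determining $R_u$ must record the length of $u$, since the padding symbol $\bot$ and the relative lengths of $u$ and $v$ matter; there is no finite-state function $u\mapsto R_u$, as witnessed by $\{(a^n,a^{n+c})\}$, whose sections all differ. Following the Ramsey-theoretic approach, I would show that $R$ has infinitely many sections if and only if there is a \emph{pumpable witness}: words $x,y,z$ with $y\neq\epsilon$ such that the sections of $xy^iz$, for $i\in\N$, are pairwise distinct. The ``only if'' direction is where Ramsey's theorem enters: an infinite family of pairwise-distinguishable words, pumped through the finitely many transition profiles of $\mathcal{A}$, yields a monochromatic pattern that can be realigned into such a lasso. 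Detecting a pumpable witness is then a reachability-with-cycle (lasso) question in an auxiliary automaton obtained by pairing $\mathcal{A}$ with a determinised copy tracking equality of sections; this automaton is exponential, but the lasso can be guessed on the fly while storing only the current macrostate, giving an \textsc{NPSpace} $=$ \textsc{PSpace} procedure by Savitch's theorem.

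For \textsc{PSpace}-hardness I would reduce from a canonical \textsc{PSpace}-complete problem, e.g.\ emptiness of the intersection of a list of DFAs, equivalently the acceptance problem for a polynomial-space Turing machine. The idea is to encode the space-bounded computation across the two tracks of the convolution --- one track carrying a configuration, the other its successor --- so that the transition relation is checked by a polynomial-size synchronous automaton, and to gate a \emph{diagonal counter gadget} (of the shape $a^n$ versus $a^{n+1}$, which by itself already forces infinitely many sections) on the existence of an accepting computation. The reduction is arranged so that $R$ collapses to a recognizable relation exactly when the instance is negative, and otherwise the gadget spawns an infinite family of pairwise-distinct sections, making $R$ non-recognizable; correctness follows from the characterization above. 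The reduction is polynomial because only the machine's local transition relation, not any determinised product, needs to be built.

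The main obstacle is the upper bound: turning the purely semantic ``infinitely many sections'' condition into a finite, pumpable witness requires the Ramsey argument to tame the interaction between the growing length of $u$ and the finitely many behaviours of $\mathcal{A}$, and one must then confirm that the resulting lasso search, despite the unavoidable subset-construction blow-up, can be carried out in polynomial space. On the hardness side the delicate point is engineering the gadget so that the gating is faithful while the reduction stays polynomial; the counter gadget supplying the infinitude of sections is routine, but wiring it to the space-bounded computation without forming an exponential product is the part that needs care.
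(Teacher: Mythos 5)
The paper does not actually prove this proposition: it is imported wholesale from \cite[Theorem~1]{ICALP19} (with \cite{BGLZ22} as an alternative source), so your attempt can only be compared against that cited work, whose technique the introduction describes as Ramsey-theoretic. At the level of architecture your plan does match it: recognizability is characterized by finiteness of the family of sections, infinitude is to be witnessed by a pumpable pattern extracted via Ramsey, the pattern is searched for on the fly in polynomial space, and hardness comes from space-bounded computation. The section characterization itself is correct --- it is the $k\to\infty$ analogue of \Cref{lem:krec-characterization} --- but your justification for the regularity of the classes $A^{(j)}=\{u : R_u=B^{(j)}\}$ contradicts your own later remark: you first claim $u\mapsto R_u$ ``factors through the behaviour of the convolution automaton'', and two sentences later you correctly observe (via $\{(a^n,a^{n+c})\}_n$) that no finite-state abstraction of $u$ determines $R_u$. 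The correct argument is that the same-section equivalence ($u\sim u'$ iff for all $v$, $(u,v)\in R \Leftrightarrow (u',v)\in R$) is itself automatic, being first-order definable from $R$, and each class $[u]$ is a section of $\sim$, hence regular.

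Two steps are genuine gaps rather than omitted routine details. First, the whole upper bound rests on the unproved equivalence ``infinitely many sections iff there exist $x,y,z$ with $y\neq\epsilon$ and the sections of $xy^iz$ pairwise distinct'', together with an unspecified macrostate invariant that lets a nondeterministic polynomial-space procedure certify, from a finite lasso, that infinitely many sections are pairwise distinct; making these precise is exactly the content of the cited theorem, not bookkeeping. Second, and more seriously, the hardness reduction as described would fail. If one track carries a configuration and the other its successor, the resulting relation is \emph{never} recognizable for any machine with infinitely many configurations --- its sections are pairwise distinct singletons --- so nothing is gated on acceptance. If instead a full computation history is written on a single track so that acceptance can matter, a polynomial-size NFA cannot recognize \emph{valid} accepting histories (that requires comparing letters a configuration-length apart at every position, i.e., the very determinization blow-up you are trying to avoid); the same polarity problem appears if you start from intersection-emptiness of DFAs, since an NFA can check membership in the union of the complements but not in the intersection. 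Every working reduction is therefore organized around \emph{invalid} histories, which an NFA checks by guessing a single fault: take, say, $R=\{(u,v) : u \text{ is not an accepting history}\}\cup\{(u,v) : |u|\neq |v|\}$, padded so that a positive instance has accepting histories of unboundedly many lengths. A negative instance collapses $R$ to $\A^*\times\A^*$, while a positive one produces infinitely many distinct sections $\{v : |v|\neq n\}$. Your closing claim that ``only the machine's local transition relation needs to be built'' is precisely what fails: locality buys invalidity-checking, not validity-checking, and note also that without padding a single accepting history yields only finitely many extra sections, leaving $R$ recognizable and the reduction unsound.
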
  

\subparagraph*{Automatic graphs.}\AP 
Let $L$ be a language of finite words over $\A$, and $R \subseteq L \times L$ binary relation over $\A$. They naturally define a directed graph 
$G=\intro*\AutGraph{L}{R}$, "ie", 
the nodes of $G$ are the words over $L$ and there is an edge in $G$ from word $u$ to word $v$ if{f} $(u,v) \in R$. 
An ""automatic graph"" is a graph of the form $\AutGraph{L}{R}$, for $R$ an
"automatic relation" and $L$ a regular language\footnote{Note that an "automatic graph" can contain self-loops.
However, since the
presence of such an edge prevent the graph from being "$k$-colorable" for any $k \geq 0$,
all our examples will be self-loop-free.}.
A ""$k$-coloring"" of $\AutGraph{L}{R}$ is a partition of $L$
into $k$ sets $V_1, \dotsc V_k$ such that $(V_i \times V_i) \cap E = \emptyset$ for every $i$. 

\begin{example} \AP\label{ex:fc} 
Consider again the "automatic relation" $R = \{(a^n,a^{n+c}) \mid n \geq 0\}$, where $c > 0$ is a fixed constant. The graph $\AutGraph{a^*}{R}$ is formed by a disjoint union of $c$ infinite directed paths, and thus it is "2-colorable". \qed 
\end{example} 
\AP
A ""$k$-regular coloring"" of an "automatic graph" is a "$k$-coloring" whose colors $(V_i)_{1 \leq i \leq k}$ are regular languages. 
\AP
A ""regular coloring"" is a "$k$-regular coloring" for some $k$.


\begin{example} 
The "automatic graph" $\AutGraph{a^*}{R}$ from \Cref{ex:fc} is "2-regular colorable". In fact, it suffices to define color $V_1$ as having every word of the form $a^n$ with 
$n \equiv i$ ({\rm mod} $2c$), for $i \in [0,c-1]$, and $V_2 = \A^* \setminus V_1$.\qed 
\end{example} 

\AP The ""$k$-regular colorability problem"" is the problem of whether a given "automatic graph" has a "$k$-regular coloring". \AP
The ""regular colorability problem"" is the problem of whether a given "automatic graph" has a "regular coloring".

\section{Separability is Equivalent to Regular Colorability}

We start by showing that the "separability problem" in terms of arbitrary "recognizable relations" is equivalent, under polynomial time reductions, 
to the "regular colorability problem". To make our statement precise, we need some terminology introduced below. 
\AP
Let $\intro*\kREC$ be the class of languages expressed by unions of products of $k$ regular languages which form a partition, that is (in the binary case), relations of the form $(L_{i_1} \times L_{j_1}) \cup \dotsb \cup (L_{i_\ell} \times L_{j_\ell})$, with $i_1,j_1,\hdots,i_\ell, j_\ell \in \lBrack 1,k \rBrack$, for some regular partition $L_1, \dotsc, L_{k}$ of $\A^*$ and $\ell \in \N$.
\AP
Note that $\REC = \bigcup_k \kREC$.
\AP%
Let us denote by $\intro*\Id$ the identity relation (on any implicit alphabet). Observe that $\Id$ is "automatic" but not "recognizable".

\regcolorabilityequivseparability


   
\begin{proof}
   We start with the last two reductions.
    Given an "automatic graph" $\AutGraph{L}{E}$ over an alphabet $\A$, consider the instance $R_1,R_2$ for the $\REC$-"separability problem", where 
    $R_1 = E$ and $R_2 = \Id$. 
    If $\AutGraph{L}{E}$ is "$k$-regular colorable" via the coloring $V_1, \dotsc, V_k$ then the $\kREC$ relation
    $\bigcup_{i \neq j} V_i \times V_j$ separates $R_1$ and $R_2$.
    Conversely, if a $\kREC$ relation $R \subseteq \A^* \times \A^*$ on the partition $V_1 \dcup \dotsb \dcup V_k = \A^*$ separates $R_1$ and $R_2$, then $\bigcup_{i \neq j} V_i \times V_j$ also separates $R_1$ and $R_2$, and this implies that $V_1, \dotsc, V_k$ is a $k$-coloring for $\AutGraph{\A^*}{E}$.

\AP For the first reduction, let us introduce some terminology.
Given two relations $R_1,R_2$ over $\A^*$, say that $u \in \A^*$ is ""compatible"" with
$u' \in \A^*$ when for all words $v \in \A^*$:
\begin{center}
    \intro*\compL: $(u,v) \in R_1 \Rightarrow (u',v) \not\in R_2$%
    ,\hphantom{\quad and \quad}
    \intro*\compR: $(v,u) \in R_1 \Rightarrow (v,u') \not\in R_2$,\\
    \intro*\compLpr: $(u',v) \in R_1 \Rightarrow (u,v) \not\in R_2$%
    \hphantom{,}\quad and \quad
    \intro*\compRpr: $(v,u') \in R_1 \Rightarrow (v,u) \not\in R_2$.
\end{center}
\AP
Define the ""incompatibility graph"" $\intro*\incompGraph{R_1}{R_2}$
as the graph whose vertices are all words of $\A^*$,
and with an edge from $u$ to $v$ whenever $u$ is not "compatible" with $v$.
Note that $\incompGraph{R}{\Id}$ is exactly the graph $\AutGraph{\A^*}{R}$.
For a less trivial example of an "incompatibility graph", see
\ifarxiv
 \Cref{apdx-sec:incompatibility}.
\else
 the "full version".
\fi

\begin{restatable}{lemma}{incompisautomatic}
    \AP\labelwithproof{lem:incomp-is-automatic}
    If $R_1$ and $R_2$ are "automatic", then so is $\incompGraph{R_1}{R_2}$.
    Moreover, we can build an automaton for $\incompGraph{R_1}{R_2}$ in polynomial time in the size of the automata for $R_1$ and $R_2$.
\end{restatable}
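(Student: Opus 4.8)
The plan is to express the edge relation of $\incompGraph{R_1}{R_2}$ as a combination of $R_1$ and $R_2$ that uses only coordinate rearrangement, intersection, union, and a single existential quantification, and then to realize each of these operations on automata. First I would negate the definition of "compatible": a word $u$ fails to be "compatible" with $u'$ exactly when \emph{some} $v\in\A^*$ makes one of the four implications false. Writing the four negated implications as
\[ C_1 \colon\ (u,v)\in R_1 \wedge (u',v)\in R_2, \qquad C_2 \colon\ (v,u)\in R_1 \wedge (v,u')\in R_2, \]
\[ C_3 \colon\ (u',v)\in R_1 \wedge (u,v)\in R_2, \qquad C_4 \colon\ (v,u')\in R_1 \wedge (v,u)\in R_2, \]
which are the negations of $\compL$, $\compR$, $\compLpr$, $\compRpr$ respectively, the edge relation of $\incompGraph{R_1}{R_2}$ is exactly $\set{(u,u') : \exists v,\ C_1 \vee C_2 \vee C_3 \vee C_4}$. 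So it is the projection onto the first two coordinates of the ternary relation $T(u,u',v) \defeq \set{(u,u',v) : C_1 \vee C_2 \vee C_3 \vee C_4}$.

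Next I would build a (nondeterministic) automaton for $T$. Each conjunct $C_i$ is the intersection of two membership conditions, each a copy of $R_1$ or $R_2$ read on two of the three coordinates of the triple $(u,u',v)$. Since "automatic relations" of a fixed arity are effectively closed under permuting coordinates and under adding a free ("cylindrification") coordinate, each such membership condition is a ternary "automatic relation" whose automaton is obtained in polynomial time by relabelling the product alphabet $\A_\bot\times\A_\bot$ into $\A_\bot\times\A_\bot\times\A_\bot$ and ignoring the dummy coordinate. Taking products (for the $\wedge$ inside each $C_i$) and a disjoint union (for the $\vee$ over $i=1,\dots,4$) yields a ternary automaton $\+B$ for $T$ of polynomial size. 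The key point here is that the construction is purely existential/positive: we never complement, so there is no exponential blow-up at this stage.

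The remaining, and only delicate, step is to project $\+B$ onto its first two coordinates, i.e.\ to produce an automaton recognizing $\lotimes \incompGraph{R_1}{R_2}$. This is where the padding of the convolution requires care, and I expect it to be the main obstacle: the witness $v$ may be strictly longer than \emph{both} $u$ and $u'$, so the convolution $u\otimes u'\otimes v$ can be longer than $u\otimes u'$. The construction is the standard one for "automatic" structures. I would first intersect $\+B$ with the (regular) language of well-formed convolutions, so that $\+B$ rejects every input that is not a genuine triple of padded words. Then I build an NFA $\+C$ over $\A_\bot\times\A_\bot$ which reads $u\otimes u'$ while guessing the letters of $v$ one at a time and feeding the resulting triples to $\+B$; crucially, once the input $u\otimes u'$ is exhausted, $\+C$ is allowed to keep reading guessed letters of $v$ while supplying $\bot$ in the first two coordinates, thereby simulating a witness longer than $u$ and $u'$. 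Because $\+B$ only accepts well-formed convolutions, $\+C$ accepts $u\otimes u'$ if and only if there is some $v$ with $(u,u',v)\in T$, which is precisely the edge relation of $\incompGraph{R_1}{R_2}$; hence this relation is "automatic".

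Finally, for the complexity claim I would simply observe that each operation used — coordinate permutation, cylindrification, product, disjoint union, and the projection gadget above — runs in polynomial time and enlarges the automaton only polynomially, and that a bounded number of them is applied. Consequently the automaton for $\incompGraph{R_1}{R_2}$ is computed in polynomial time in the sizes of the automata for $R_1$ and $R_2$, as required. Everything apart from the projection is routine closure of regular languages under Boolean operations; the projection and its interaction with padding is the one step that genuinely needs the automatic-structures argument.
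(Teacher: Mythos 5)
Your proposal is correct and takes essentially the same route as the paper: both negate the four compatibility conditions \compL, \compR, \compLpr, \compRpr, write the "incompatibility graph" as the union of the four resulting existentially-quantified conjunctions, and realize each one by a polynomial-time product construction with nondeterministic guessing of the witness $v$. The paper compresses the cylindrification, projection, and padding bookkeeping (including the case of a witness $v$ longer than both $u$ and $u'$, which you rightly single out) into the phrase ``product construction,'' so your write-up is just a more detailed rendering of the same argument.
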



    \AP Given an instance $(R_1,R_2)$ of the "separability problem", we
    reduce it to the "regular colorability problem" on its "incompatibility graph" $\incompGraph{R_1}{R_2}$.

    \proofcase{Left-to-right implication:} 
   Assume that there exists $S$ in $\kREC$ that "separates" $R_1$ from $R_2$.
    Then $S$ can be written as $(A_{i_1}\times A_{j_1}) \cup \cdots \cup (A_{i_\ell}\times A_{j_\ell})$, 
    where $(A_1,\hdots,A_k)$ is a partition of $\A^*$ in $k$ regular languages.
   We define the color of a word $u \in \A^*$ as the unique $i \in \lBrack 1,k \rBrack$
    "st" $u \in A_i$. In other words, the coloring is simply $(A_1,\hdots,A_k)$. 

    This is indeed a proper coloring: if $u$ and $u'$ have the same color,
    we claim that $u$ is "compatible" with $u'$. Indeed, take any $v \in \A^*$: if $(u,v) \in R_1$,
    then $(u,v) \in S$, so $(u,v) \in A_{i_m}\times A_{j_m}$ for some $m$. But since $u$ has the same color 
    as $u'$, the fact that $u \in A_{i_m}$ implies $u' \in A_{i_m}$, and hence 
    $(u',v) \in A_{i_m}\times A_{j_m}\subseteq S$.
    But $S$ separates $R_1$ from $R_2$, and therefore $(u',v) \not\in R_2$. This tells us that \compL\ holds. 
    The other conditions hold by symmetry.
    We conclude that $(A_1,\hdots,A_k)$ defines
    a proper coloring of $\incompGraph{R_1}{R_2}$, and this coloring, with $k$ colors, is 
    "regular@regular coloring" since the $A_i$'s are regular languages by definition.

    \proofcase{Right-to-left implication:} Assume that $\incompGraph{R_1}{R_2}$ is finitely colorable, say by
    $(A_1,\hdots,A_k)$. Then let $S$ be the union of all $S_i$'s where
    \begin{align*}
        S_{i} & \defeq \{(u,v) \mid u \in A_i \text{ and } (u',v) \in R_1 \text{ for some } u' \in A_i \}\\
&         \hphantom{\defeq~}\cup \{(u,v) \mid v \in A_i \text{ and } (u,v') \in R_1 \text{ for some } v' \in A_i \}.  
    \end{align*}
    Since $(A_1,\hdots,A_k)$ covers every node of $\incompGraph{R_1}{R_2}$, we get $R_1 \subseteq S$.
    Moreover, we claim that $R_2 \cap S = \emptyset$. Indeed, if $(u,v) \in S$,
    then $(u,v) \in S_{i}$ for some $i,j$. It either means that \proofcase{1}
    $(u',v) \in R_1$ for some $u' \in A_i$, or \proofcase{2} $(u,v') \in R_2$
    for some $v' \in A_i$. In case \proofcase{1}, the fact that $u \in A_i$ implies that $u$ and $u'$
    have the same color. Thus, $u$ must be "compatible" with $u'$ and hence
    $(u,v) \not\in R_2$ using \compLpr. The other case is symmetric.
    Therefore, $(u,v) \not\in R_2$, and thus $S$ separates $R_1$ from $R_2$.

    Finally,  $S$ is "recognizable"; in fact, 
    $S = \bigcup_{i=1}^k \bigl( A_i \times R_1[A_i] \bigr) \cup \bigl( R_1^{-1}[A_i] \times A_i \bigr)$, 
    where for any set $X \subseteq \A^*$ we define $R_1[X]$ (resp. $R_1^{-1}[X]$) as the set
    of $v\in \A^*$ (resp. $u \in \A^*$) such that $(u,v) \in R_1$ for some $u \in X$
    (resp. $v\in X$).
    Hence, $R_1$ and $R_2$ are $\REC$-"separable". 
\end{proof}

It is not known to date whether the "regular colorability problem" is decidable, and hence 
the same holds for the $\REC$-"separability problem"
in light of the previous theorem. This is due to the fact that there are no known characterizations of when an "automatic graph" is finitely colorable. 
In spite of this, we believe that the connection between separability and finite colorability is of interest, as it provides us with a way to define and study meaningful 
restrictions of our problems. The first such restriction corresponds to the "$k$-regular colorability problem" for "automatic graphs", which we study in the next section. 


\section[$k$-Regular Colorability Problem]{$\boldsymbol{k}$-Regular Colorability Problem}

While we do not know how to approach the "regular colorability problem", we show that as soon as we add the restriction that the number of colors is bounded, the problem becomes undecidable; i.e., the "$k$-regular colorability problem" is undecidable for $k\geq 2$. Using this, we obtain in the next section 
the undecidability for the "separability problem" on two natural classes of 
"recognizable relations". This is proven by a reduction from a suitable problem on reversible 
Turing Machines with certain restrictions, which we call ``well-founded''.

\subsection{Regularity of Reachability for Turing Machines}
We use the standard notation $u[i..j]$ to denote the factor of a word $u$ between (and including) positions $i$ and $j$, and $u[i]$ to denote $u[i..i]$.
Consider any deterministic Turing Machine (TM) $T = \tup{Q,\Gamma,\bot,\delta,q_0,F}$, where $Q$ is the set of states, $\Gamma$ is tape alphabet, $\bot$ is the blank symbol, $\delta: (Q \setminus F) \times \Gamma_\bot \to Q \times \Gamma \times \set{L,R}$ is the transition (partial) function, where $\Gamma_\bot = \Gamma \cup \set\bot$, and $q_0$ and $F$ is the initial and set of final states, respectively.
We represent a configuration with tape content $w \cdot \bot^\omega$ (where $w \in \Gamma^* \cdot \set{\bot}$), in state $q$ and with the head pointing to the cell number $1 \leq i \leq |w|$, as the string
\[
   w[1..i-1] \cdot (w[i],q) \cdot w[i+1..|w|] 
\]
over the alphabet $\A_T = \Gamma \cup (\Gamma_\bot \times Q)$.
\AP In light of this representation, we will henceforth denote by ``configuration'' any string from the set  $\intro*\configs \defeq (\Gamma^* \cdot (\Gamma_\bot \times Q)) \cup  (\Gamma^* \cdot (\Gamma \times Q) \cdot \Gamma^*)$. The ""initial configuration"" is $(\bot,q_0)$.
\AP The ""configuration graph"" of $T$ is the infinite graph $\intro*\confGraph$ having $\configs$ as set of vertices and an edge from $c$ to $c'$, denoted $c \rightarrow c'$, if $c'$ is the configuration of the next step of $T$ starting from $c$. Observe that the "configuration graph" $\confGraph$ of any TM $T$ is an effective "automatic graph" (see, e.g., \cite{KL10}).

\AP We say that a deterministic TM $T$ is ""reversible"" if every node of $\confGraph$ has in-degree at most 1, in other words if the machine is co-deterministic\footnote{Note
that a modern proof of undecidability of the isomorphism problem for automatic structures by
Blumensath \cite[\S VIII. Theorem 4.3, p. 396 \& second claim, p. 398]{blumensath2023MSO} also relies on the use of "reversible" Turing machines.}.
We say that a TM $T$ is a ""well-founded Reversible Turing Machine"" (\reintro{wf-RTM}) if its "configuration graph" is such that (1) the "initial configuration" has in-degree 0 (2) every node has in-degree and out-degree at most one (3) there are no infinite backward paths $c_1 \leftarrow c_2 \leftarrow \dotsb$ in $\confGraph$. 

\AP Note that every "well-founded Reversible Turing Machine" is deterministic and "reversible" and, moreover,
its "configuration graph" is a (possibly infinite) disjoint union of
directed paths, which are all finite, or isomorphic to $(\mathbb{N}, +1)$.
The set of ""reachable configurations"", denoted by $\intro*\Reach$, is 
the set of all configurations that admit a path from the "initial configuration"
in $\confGraph$, for a given TM $T$.
Such a configuration graph is depicted on \Cref{subfig:config-graph-wf-RTM}.

\AP The ""reachable regularity problem"" is the problem of, given a "wf-RTM" $T$, whether its set of "reachable configurations" is a regular language. To show that is it undecidable, we exhibit a reduction from the halting problem on deterministic "reversible" Turing machines.

\begin{proposition}[{\cite[Theorem 1]{lecerf1963machines}}]
    \AP\label{prop:halting-problem-detrevTM}
    The halting problem on deterministic "reversible" Turing machines is undecidable.
\end{proposition}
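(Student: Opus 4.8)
The plan is to reduce from the halting problem for arbitrary deterministic Turing machines, whose undecidability is classical. Given an arbitrary deterministic TM $M$, I would build a deterministic "reversible" TM $M'$ that simulates $M$ and reaches a final state if and only if $M$ halts; since the constructed machine is reversible by design, undecidability of halting transfers from the general case to the reversible one.

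The only obstacle to reversibility is that $M$ may be \emph{co-nondeterministic}: two distinct configurations may share a successor (for instance when a transition overwrites a cell, destroying the information needed to reconstruct the previous content), so nodes of the "configuration graph" of $M$ can have in-degree greater than one. Bennett's trick removes this obstruction by recording a \emph{history}: each time $M'$ simulates a step of $M$, it appends to an auxiliary history track the identifier of the transition rule just applied. The current configuration together with the last history symbol then determines the previous configuration uniquely, so each simulation step becomes locally invertible.

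Concretely, following Bennett, $M'$ would run in three phases. In the \emph{compute} phase it simulates $M$ while logging transitions on the history track; this phase terminates exactly when $M$ halts. In the \emph{copy} phase it copies $M$'s output onto blank output cells, which is reversible since it only writes onto blanks. In the \emph{uncompute} phase it retraces the logged computation backwards, erasing the history and restoring the original input. The net effect is a bijective transformation from the initial configuration (input, blank history) to a final configuration (input, output, blank history), so every node of the "configuration graph" of $M'$ has in-degree at most one, i.e.\ $M'$ is "reversible". A multi-track or multi-tape formulation of this machine can moreover be compiled down to the single-tape model of the preliminaries while preserving reversibility, which is a standard fact.

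The main obstacle is checking \emph{local reversibility}, namely that the transition function of $M'$ is injective so that its "configuration graph" genuinely has in-degree at most one. The interiors of the three phases are reversible essentially by construction, but the seams between them---where the direction of simulation is reversed and control is handed off---must be engineered so that every configuration admits a unique backward move; this bookkeeping, rather than the high-level idea, is where the real work lies. Granting local reversibility, correctness of the simulation immediately yields ``$M'$ halts if and only if $M$ halts'', which completes the reduction.
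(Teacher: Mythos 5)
The paper does not actually prove this proposition: it is imported verbatim from the literature, with a citation to Lecerf's 1963 theorem (the result is also classically attributed to Bennett, 1973; see the pointer to Morita's book that the paper gives right after the statement). So there is no internal proof to compare against; what you have written is, in essence, the standard argument \emph{behind} that citation, and it is sound. Two remarks on your reconstruction. First, for mere undecidability of halting you only need the first of Bennett's three phases: given a deterministic machine $M$, the machine $M'$ that simulates $M$ while appending to a history track the identifier of each rule applied is already reversible and halts if and only if $M$ halts. The copy and uncompute phases exist to serve Bennett's stronger goal (computing the input--output function \emph{cleanly}, leaving no garbage on the tape), which the reduction does not require; including them is harmless, but it is exactly what creates the delicate ``seams'' you identify as the hard part, so dropping them both shortens and simplifies the proof. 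Second, your phrase ``the transition function of $M'$ is injective'' is not quite the right local condition: backward determinism for quintuple machines is the requirement that any two distinct rules entering the same state move the head in the same direction and write distinct symbols, and it is this condition that guarantees in-degree at most one for \emph{every} configuration of the configuration graph (reachable or not), which is how the paper defines reversibility. The history trick delivers precisely this condition, since each simulated step records the identity of the rule being applied, so distinct rules entering the same state leave distinct marks. With those two adjustments your argument is the accepted proof of the cited result.
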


For more details and pointers on "reversible" Turing machines, see \cite[Chapter 5]{Morita2017}.

\begin{restatable}{lemma}{reachableregularity}
    \AP\labelwithproof{lem:reachable-regularity}
    The "reachable regularity problem" is undecidable.
\end{restatable}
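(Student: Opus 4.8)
The plan is to reduce from the halting problem for deterministic \kl{reversible} Turing machines, which is undecidable by \Cref{prop:halting-problem-detrevTM}. The naive attempt---feeding the reversible machine itself to the \kl{reachable regularity problem}---fails: a non-halting reversible machine can have a perfectly regular set of reachable configurations (e.g.\ a machine that walks right forever on a blank tape). So I would instead force non-regularity by making the constructed machine write out the \emph{entire} computation history. Concretely, fix a deterministic \kl{reversible} machine $M$; after a standard normalisation (a dedicated initial state that no transition ever produces) we may assume the initial configuration $c_0$ of $M$ on empty input has in-degree $0$, so that its forward orbit $c_0 \to c_1 \to \cdots$ is either finite (if $M$ halts) or a \emph{simple} infinite path. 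From $M$ I build a \kl{wf-RTM} $T$ whose tape always holds a word $\mathtt{enc}(c_0)\,\#\,\mathtt{enc}(c_1)\,\#\cdots\#\,\mathtt{enc}(c_n)\,\#$ followed by a bounded work zone: $T$ repeatedly computes $c_{n+1}$ from $c_n$ by simulating one step of $M$ in the work zone and then \emph{appends} the new block, halting once $M$ enters a final state.

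Next I would verify that $T$ is a genuine \kl{wf-RTM}. Determinism is clear. For reversibility, a single backward step of $T$ is computable: locate the last block $\mathtt{enc}(c_{n+1})$, recover $c_{n}$ by running $M$ one step \emph{backwards} (possible since $M$ is \kl{reversible}), check it against the previous block, and erase the appended material; since $T$ only ever appends, no information is lost and every configuration has a unique predecessor. For well-foundedness, the pair (number of completed $\#$-blocks, position within the current append phase) is a well-founded measure that strictly decreases along every backward step, ruling out infinite backward paths, and the empty-history initial configuration has in-degree $0$ by construction. Hence $T$ is an admissible input, and $\Reach$ is exactly the forward path of $T$ from its initial configuration.

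It then remains to show that $\Reach$ is regular \emph{iff} $M$ halts. If $M$ halts, this forward path is finite, so $\Reach$ is finite and a fortiori regular. If $M$ does not halt, the configurations $c_0,c_1,\dots$ are pairwise distinct (the orbit is a simple path), so---the alphabet being finite---their lengths are unbounded; consequently the infinite word $\alpha = \mathtt{enc}(c_0)\#\mathtt{enc}(c_1)\#\cdots$ has unbounded gaps between consecutive occurrences of $\#$ and is therefore \emph{not} ultimately periodic. Suppose towards a contradiction that $\Reach$ were regular. Intersecting it with the regular set of ``milestone'' configurations (those in which $T$ has just finished an append, identified by its state) and applying the letter-to-letter projection that forgets the state annotation yields the set $W$ of prefixes of $\alpha$ ending in $\#$; as regular languages are closed under intersection and such projections, $W$ would be regular. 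Its prefix-closure, which equals the full prefix language of $\alpha$ since $\alpha$ has infinitely many $\#$, would then also be regular, forcing $\alpha$ to be ultimately periodic (the prefix language of an infinite word is regular exactly when the word is ultimately periodic) --- contradicting the previous sentence. Thus $\Reach$ is non-regular whenever $M$ does not halt, and undecidability of the \kl{reachable regularity problem} follows from \Cref{prop:halting-problem-detrevTM}.

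The main obstacle is the machine construction of the first two paragraphs: making the history-writing machine \emph{simultaneously} deterministic, \kl{reversible}, and well-founded requires care, since each append phase must be undoable step by step through $M$'s reverse transitions without ever creating a second predecessor, while a well-founded measure must decrease on every backward move. Once $T$ is correctly in place, the equivalence ``$\Reach$ regular $\iff$ $M$ halts'' follows cleanly from the dichotomy above.
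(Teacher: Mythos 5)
Your proposal follows the same reduction skeleton as the paper---both reduce from the halting problem for deterministic \kl{reversible} Turing machines (\Cref{prop:halting-problem-detrevTM}) and build a \kl{wf-RTM} whose reachable set is finite (hence regular) when the simulated machine halts and provably non-regular otherwise---but the gadget and the non-regularity argument are genuinely different. The paper's machine $T'$ keeps a single counter suffix $a^n b^n$ that is incremented after each simulated step; non-regularity then follows in one line from closure under homomorphic images, since the projection of $\Reach$ onto $\{a,b\}$ is an infinite subset of $\{a^n b^{n'} \mid n-2 \leq n' \leq n+2\}$. Your machine instead writes the full computation history (Bennett's trick), and non-regularity goes through milestone extraction, prefix closure, and the fact that the prefix language of an infinite word is regular iff the word is ultimately periodic. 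Both are sound, and the verification burden for reversibility and well-foundedness is comparable. What each buys: your history tape makes reversibility of $T$ conceptually transparent (nothing is ever erased), and would even work if $M$ were not reversible; the paper's counter keeps the tape small and, crucially, makes the reduction insensitive to whether $M$'s orbit revisits configurations, since the counter grows no matter what.

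That last point is where your proof has a genuine soft spot. Your non-regularity argument needs the orbit $c_0, c_1, \dots$ of $M$ to be \emph{simple} (otherwise $\alpha$ is ultimately periodic, and a non-halting, cycling $M$ would yield a regular $\Reach$, breaking the reduction), and you secure this via a ``standard normalisation'' giving $c_0$ in-degree $0$. As described---a dedicated initial state $q_0'$ duplicating the outgoing transitions of $q_0$---this normalisation \emph{destroys} reversibility: the configurations $(u, q_0, v)$ and $(u, q_0', v)$ have the same successor, so some node of the configuration graph has in-degree $2$. Worse, in-degree $0$ of the start alone does not give simplicity; you need it \emph{together with} reversibility, so after the naive normalisation the orbit can still be a lasso (if $M$'s orbit returns to $M$'s old initial configuration), and the reduction fails. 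The fix is that in this paper's machine model no normalisation is needed at all: transitions write only symbols of $\Gamma$, never $\bot$, so no configuration can step into the all-blank \kl{initial configuration}; hence $c_0$ has in-degree $0$ for free, and combined with reversibility of $M$ this yields the pairwise-distinctness of $c_0, c_1, \dots$ that your argument requires. With that observation substituted for the flawed normalisation, your proof goes through.
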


\begin{proof}[Proof sketch]
    By reducing the halting problem on deterministic "reversible" Turing machines,
    in such a way that the "reachable configurations" whose
    state $q$ coincide with the state of the original machine are
    of the form $(u q v a^n b^n)$ where $(u q v)$ is a configuration of the original machine,
    $a$ and $b$ are new symbols,
    and $n\in\N$. Transitions are defined in such a way that the new machine is a
    "wf-RTM": this is implemented by having, for every transition $uqv \to u'q'v'$ of the original machine and every $n\in \N$, a (multi-step) transition $(u q v a^n b^n) \to^* (u' q' v' a^{n+1} b^{n+1})$---and is illustrated in \Cref{fig:reachable-regularity}.
	\begin{figure}[htb]
		\centering
		\begin{tikzpicture}[>={Classical TikZ Rightarrow}, font = \small]
	\node (0) at (0,0) {0};
	\node[right = 0cm of 0] (1) {0};
	\node[right = 0cm of 1] (2) {1};
	\node[right = 0cm of 2] (3) {0};
	\node[right = 0cm of 3] (4) {1};
	\node[right = 0cm of 4] (5) {$a\vphantom{b}$};
	\node[right = 0cm of 5] (6) {$a\vphantom{b}$};
	\node[right = 0cm of 6] (7) {$a\vphantom{b}$};
	\node[right = 0cm of 7] (8) {$b$};
	\node[right = 0cm of 8] (9) {$b$};
	\node[right = 0cm of 9] (10) {$b$};

	\draw[rounded corners=4pt] (0.south west) rectangle (10.north east);
	\draw[->, thick] ($(5.north)+(0,.4)$) -- ($(5.north)+(0,.1)$);
	\node[above=.05cm, circle, draw=cBlue, fill=cBlue, opacity=.5, text opacity=1, inner sep=1.5pt] at ($(5.north)+(0,.4)$) {$p$};

	\node[below = 1.2cm of 0] (0') {0};
	\node[right = 0cm of 0'] (1') {0};
	\node[right = 0cm of 1'] (2') {1};
	\node[right = 0cm of 2'] (3') {0};
	\node[right = 0cm of 3'] (4') {1};
	\node[right = 0cm of 4'] (5') {1};
	\node[right = 0cm of 5'] (6') {$a\vphantom{b}$};
	\node[right = 0cm of 6'] (7') {$a\vphantom{b}$};
	\node[right = 0cm of 7'] (8') {$b$};
	\node[right = 0cm of 8'] (9') {$b$};
	\node[right = 0cm of 9'] (10') {$b$};

	\draw[rounded corners=4pt] (0'.south west) rectangle (10'.north east);
	\draw[->, thick] ($(6'.north)+(0,.4)$) -- ($(6'.north)+(0,.1)$);
	\node[above=.05cm, circle, draw=cRed, fill=cRed, opacity=.5, text opacity=1, inner sep=1.5pt] at ($(6'.north)+(0,.4)$) {$\phantom{q}$};

	\node[below = 1.2cm of 0'] (0'') {0};
	\node[right = 0cm of 0''] (1'') {0};
	\node[right = 0cm of 1''] (2'') {1};
	\node[right = 0cm of 2''] (3'') {0};
	\node[right = 0cm of 3''] (4'') {1};
	\node[right = 0cm of 4''] (5'') {1};
	\node[right = 0cm of 5''] (6'') {$a\vphantom{b}$};
	\node[right = 0cm of 6''] (7'') {$a\vphantom{b}$};
	\node[right = 0cm of 7''] (8'') {$a\vphantom{b}$};
	\node[right = 0cm of 8''] (9'') {$a\vphantom{b}$};
	\node[right = 0cm of 9''] (10'') {$b$};

	\draw[rounded corners=4pt] (0''.south west) rectangle (10''.north east);
	\draw[->, thick] ($(10''.north)+(0,.4)$) -- ($(10''.north)+(0,.1)$);
	\node[above=.05cm, circle, draw=cRed, fill=cRed, opacity=.5, text opacity=1, inner sep=1.5pt] at ($(10''.north)+(0,.4)$) {$\phantom{q}$};

	\node[below = 1.2cm of 0''] (0''') {0};
	\node[right = 0cm of 0'''] (1''') {0};
	\node[right = 0cm of 1'''] (2''') {1};
	\node[right = 0cm of 2'''] (3''') {0};
	\node[right = 0cm of 3'''] (4''') {1};
	\node[right = 0cm of 4'''] (5''') {1};
	\node[right = 0cm of 5'''] (6''') {$a\vphantom{b}$};
	\node[right = 0cm of 6'''] (7''') {$a\vphantom{b}$};
	\node[right = 0cm of 7'''] (8''') {$a\vphantom{b}$};
	\node[right = 0cm of 8'''] (9''') {$a\vphantom{b}$};
	\node[right = 0cm of 9'''] (10''') {$b$};
	\node[right = 0cm of 10'''] (11''') {$b$};
	\node[right = 0cm of 11'''] (12''') {$b$};
	\node[right = 0cm of 12'''] (13''') {$b$};

	\draw[rounded corners=4pt] (0'''.south west) rectangle (13'''.north east);
	\draw[->, thick] ($(13'''.north)+(0,.4)$) -- ($(13'''.north)+(0,.1)$);
	\node[above=.05cm, circle, draw=cRed, fill=cRed, opacity=.5, text opacity=1, inner sep=1.5pt] at ($(13'''.north)+(0,.4)$) {$\phantom{q}$};

	\node[below = 1.2cm of 0'''] (0'''') {0};
	\node[right = 0cm of 0''''] (1'''') {0};
	\node[right = 0cm of 1''''] (2'''') {1};
	\node[right = 0cm of 2''''] (3'''') {0};
	\node[right = 0cm of 3''''] (4'''') {1};
	\node[right = 0cm of 4''''] (5'''') {1};
	\node[right = 0cm of 5''''] (6'''') {$a\vphantom{b}$};
	\node[right = 0cm of 6''''] (7'''') {$a\vphantom{b}$};
	\node[right = 0cm of 7''''] (8'''') {$a\vphantom{b}$};
	\node[right = 0cm of 8''''] (9'''') {$a\vphantom{b}$};
	\node[right = 0cm of 9''''] (10'''') {$b$};
	\node[right = 0cm of 10''''] (11'''') {$b$};
	\node[right = 0cm of 11''''] (12'''') {$b$};
	\node[right = 0cm of 12''''] (13'''') {$b$};

	\draw[rounded corners=4pt] (0''''.south west) rectangle (13''''.north east);
	\draw[->, thick] ($(6''''.north)+(0,.4)$) -- ($(6''''.north)+(0,.1)$);
	\node[above=.05cm, circle, draw=cBlue, fill=cBlue, opacity=.5, text opacity=1, inner sep=1.5pt] at ($(6''''.north)+(0,.4)$) {$q$};

	\draw[->, dashed] ($(10.east)+(.3,-.2)$) to[bend left=50]
		node[midway, right=.15cm, align=left, text width=3.5cm, font=\footnotesize] {simulate $T$}
		($(10'.east)+(.3,.2)$);

	\draw[->, dashed] ($(10'.east)+(.3,-.2)$) to[bend left=50]
		node[midway, right=.15cm, align=left, text width=3.5cm, font=\footnotesize]
			{overwrite the first two $b$'s with $a$'s}
		($(10''.east)+(.3,.2)$);

	\draw[->, dashed] ($(10''.east)+(1.5,-.2)$) to[bend left=50]
		node[midway, right=.15cm, align=left, text width=3.5cm, font=\footnotesize]
			{append three $b$'s}
		($(13'''.east)+(.3,.2)$);

	\draw[->, dashed] ($(13'''.east)+(.3,-.2)$) to[bend left=50]
		node[midway, right=.15cm, align=left, text width=3.5cm, font=\footnotesize]
			{go back to the new position, in the new state}
		($(13''''.east)+(.3,.2)$);
\end{tikzpicture}
		\caption{
			\AP\label{fig:reachable-regularity}
			Encoding of a single transition of the form
			``when reading a blank in state $\color{cBlue} p$, write a
			$1$, go in state $\color{cBlue} q$ and move right''
			of the machine $T$ in the machine $T'$
			in the proof of \Cref{lem:reachable-regularity}.
			Red unlabelled states represent states of $T'$
			that are not originally present in $T$.
		}
	\end{figure}
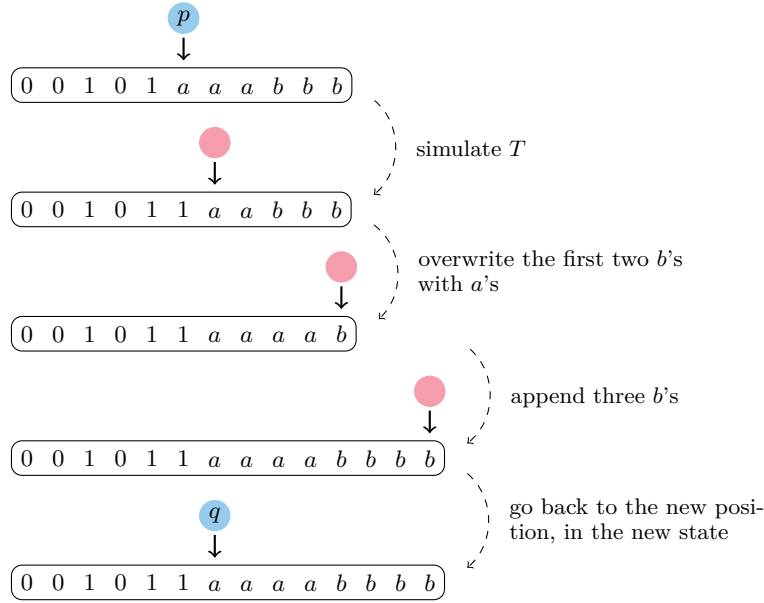
	Moreover:
    \begin{itemize}
        \item if the original machine was halting, then the "reachable configurations"
            of the new one are finite and hence regular;
        \item otherwise, the set of "reachable configurations" is not regular,
            which follows from the non-regularity of any infinite subset of $\{a^n b^n \mid n \in \N\}$.
    \end{itemize}
    
    See
    \ifarxiv
     \Cref{proof-lem:reachable-regularity}
    \else
     the "full version"
    \fi
    for more details.
\end{proof}

\subsection[Undecidability of the $k$-Regular Colorability Problem]{Undecidability of the $\boldsymbol{k}$-Regular Colorability Problem}
We can now show undecidability for the "$k$-regular colorability problem" by reduction from the "reachable regularity problem" as defined before.

\begin{fact}
    \AP\label{fact:initial-nodes-are-regular}
    Given an "automatic graph", the set of nodes with no predecessor is effectively a regular language. 
\end{fact}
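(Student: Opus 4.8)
The plan is to prove the statement by complementation. The set of nodes with no predecessor is $\A^* \setminus P$, where $P \defeq \{v \in \A^* \mid (u,v) \in E \text{ for some } u \in \A^*\}$ is the set of nodes that \emph{do} admit a predecessor. Since regular languages are effectively closed under complement, it suffices to show that $P$ is an effectively regular language, "ie" that the projection of the "automatic relation" $E$ onto its second coordinate is regular.

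To obtain $P$ from an automaton for $\otimes E$ I would proceed in three elementary steps. First, apply the letter-to-letter homomorphism $h \colon (\A_\bot \times \A_\bot)^* \to \A_\bot^*$ given by $(a,b) \mapsto b$; since regular languages are closed under homomorphic images, $M \defeq h(\otimes E)$ is an effectively regular language over $\A_\bot$. By the definition of the convolution $\otimes$, a word belongs to $M$ exactly when it has the form $v \bot^{j}$ with $(u,v) \in E$ for some $u$ and $j = \max(0, |u| - |v|)$; in particular $v \in \A^*$ and every $\bot$ occurs in a terminal block. Second, take the right quotient $N \defeq \{w \mid w \bot^{j} \in M \text{ for some } j \geq 0\}$, which, being a right quotient of a regular language by the regular language $\bot^*$, is again effectively regular. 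Third, intersect with $\A^*$ to discard words containing a $\bot$: one checks that $N \cap \A^* = P$. Indeed $P \subseteq N \cap \A^*$ is immediate, and conversely if $w \in N \cap \A^*$ then $w \bot^{j} = v \bot^{j_0} \in M$ for some $(u,v) \in E$, and since both $w$ and $v$ are $\bot$-free while the remaining $\bot$'s form a terminal block, necessarily $w = v$, so $w \in P$. Hence $P$ is effectively regular, and so is its complement $\A^* \setminus P$, which is precisely the set of nodes with no predecessor.

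I do not expect a genuine obstacle: the argument is just the closure of regular languages under projection, and the only subtlety is the bookkeeping around the padding symbol $\bot$. This is handled cleanly because, by construction, every word of $\otimes E$ is a well-formed convolution, so in each word of $M$ the $\bot$'s form a terminal block and stripping them recovers exactly a word of $\A^*$. Alternatively, and more succinctly, one can invoke the closure of "automatic relations" under first-order operations: the set in question is defined over the "automatic graph" by the formula $\forall u\, \neg E(u,v)$, and since this defines a unary relation, its automaticity coincides with regularity, with the automaton computable from that of $E$.
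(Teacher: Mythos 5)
Your proof is correct. The paper itself states this fact without any proof, treating it as folklore, so there is no argument to compare against; what you supply is exactly the standard justification. Both of your routes are sound: the explicit construction (project $\otimes E$ onto the second coordinate via the letter-to-letter homomorphism $(a,b)\mapsto b$, strip the terminal $\bot$-padding by a right quotient with $\bot^*$, intersect with $\A^*$, then complement --- every step effective), and the high-level one (the set is defined by the first-order formula $\forall u\,\neg E(u,v)$ over an automatic structure, and unary automatic relations are precisely the regular languages). Your bookkeeping around the padding symbol is the only place where care is needed, and you handle it correctly: in every word of $h(\otimes E)$ the $\bot$'s form a terminal block, so quotienting and intersecting with $\A^*$ recovers exactly the set of nodes admitting a predecessor.
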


\kregcolundec

\begin{figure}
    \centering
    \begin{subfigure}[b]{.4\textwidth}
        \centering
        \begin{tikzpicture}[>={Classical TikZ Rightarrow},
	node distance=.7cm,
	every node/.append style={fill, circle, inner sep=1pt, outer sep=3pt}]
	
	\fill[rounded corners, draw=cGreen, fill=cGreen, opacity=.3]
		(-.2,-.2) rectangle (2.3, .2);
	\fill[rounded corners, draw=cYellow, fill=cYellow, opacity=.3]
		(-.2,.2) rectangle (.2, -1.6);

	\node (a0) at (0,0) {};
	\foreach \i in {0,...,2} {
		\pgfmathtruncatemacro{\next}{\i + 1}
		\node [right of = a\i] (a\next) {};
		\draw[->] (a\i) edge (a\next);
	}
	
	\node [below of = a0] (b0) {};
	\foreach \i in {0,...,4} {
		\pgfmathtruncatemacro{\next}{\i + 1}
		\node [right of = b\i] (b\next) {};
		\draw[->] (b\i) edge (b\next);
	}
	\node[draw=none, fill=none, right = 0cm of b5] (binf) {$\cdots$};

	\node [below of = b0] (c0) {};
	\foreach \i in {0,1} {
		\pgfmathtruncatemacro{\next}{\i + 1}
		\node [right of = c\i] (c\next) {};
		\draw[->] (c\i) edge (c\next);
	}

	\node[draw=none, fill=none, font=\small] [left = 0cm of b0] {$\InitC{cYellow}$};
	\node[draw=none, fill=none, font=\small, align=center] [above = -.1cm of $(a1)!0.5!(a2)$]
		{$\ReachC{cGreen}$};
\end{tikzpicture}
        \caption{
            \AP\label{subfig:config-graph-wf-RTM}
            Configuration graph of a "well-founded Reversible Turing Machine".
        }
    \end{subfigure}
    \hspace{1cm}
    \begin{subfigure}[b]{.5\textwidth}
        \centering
        \begin{tikzpicture}[>={Classical TikZ Rightarrow},
	node distance=.7cm,
	every node/.append style={fill, circle, inner sep=1pt, outer sep=3pt}]

	\fill[rounded corners, draw=cGreen, fill=cGreen, opacity=.3]
		(-.2,-.7) rectangle (2.3, .2);
	\fill[rounded corners, draw=cYellow, fill=cYellow, opacity=.3]
		(-.2,.2) rectangle (.2, -3.1);

	\node[cBlue] (a0) at (0,0) {};
	\node[cRed] (a'0) [below = .2cm of a0] {};
	\draw[->, densely dotted] (a0) edge (a'0);
	\foreach \i in {0,...,2} {
		\pgfmathtruncatemacro{\next}{\i + 1}
		\node[cBlue, right of = a\i] (a\next) {};
		\node[cRed, right of = a'\i] (a'\next) {};
		\draw[->] (a'\i) edge (a\next);
		\draw[->, densely dotted] (a\next) edge (a'\next);
	}
	
	\node[cBlue, below of = a'0] (b0) {};
	\node[cRed] (b'0) [below = .2cm of b0] {};
	\draw[->, densely dotted] (b0) edge (b'0);
	\foreach \i in {0,...,4} {
		\pgfmathtruncatemacro{\next}{\i + 1}
		\node[cBlue, right of = b\i] (b\next) {};
		\node[cRed, right of = b'\i] (b'\next) {};
		\draw[->] (b'\i) edge (b\next);
		\draw[->, densely dotted] (b\next) edge (b'\next);
	}
	\node[draw=none, fill=none, right = 0cm of $(b5)!0.5!(b'5)$] (binf) {$\cdots$};

	\node[cBlue, below of = b'0] (c0) {};
	\node[cRed] (c'0) [below = .2cm of c0] {};
	\draw[->, densely dotted] (c0) edge (c'0);
	\foreach \i in {0,1} {
		\pgfmathtruncatemacro{\next}{\i + 1}
		\node[cBlue, right of = c\i] (c\next) {};
		\node[cRed, right of = c'\i] (c'\next) {};
		\draw[->] (c'\i) edge (c\next);
		\draw[->, densely dotted] (c\next) edge (c'\next);
	}

	\draw[->, densely dashed] (a0) edge[bend right] (b0);
	\draw[->, densely dashed] (a0) edge[bend right] (c0);

	\node[draw=none, fill=none, cYellow, font=\footnotesize, text width=2cm, align=right]
		[left = .35cm of $(b0)!0.5!(b'0)$]
		{nodes originating from	$\InitC{cYellow}$};
	\node[draw=none, fill=none, cGreen, font=\footnotesize, align=center]
		[above = -1.6cm of $(a1)!0.5!(a2)$]
		{nodes originating from $\ReachC{cGreen}$};
\end{tikzpicture}
        \caption{
            \AP\label{subfig:reduction-wf-RTM}
            The automatic graph to which it is reduced.
        }
    \end{subfigure}
    \caption{
        \AP\label{fig:reduction-wf-RTM-to-coloring}
        Reduction used in the proof of \Cref{thm:k-reg-col-undec}.
    }
\end{figure}
\begin{proof}
	\proofcase{Lower bound.}
    By reduction from the "reachable regularity problem" for "wf-RTM"s
    (\Cref{lem:reachable-regularity}). We first show it for $k=2$.
    \AP Given a "wf-RTM" $T$, let $c_{\textit{init}}$ be its "initial configuration".
    Observe that the set $\intro*\Init$ of all vertices of $\confGraph$ with in-degree $0$ is an effective regular language (by \Cref{fact:initial-nodes-are-regular}), and that $c_{\textit{init}} \in \Init$. Let $B$ and $R$ be fresh symbols. 
    Consider the "automatic graph" $\AutGraph{L}{E}$ for $L = \set{B,R} \times \configs$, having 
    an edge from $(z,c) \in \{B,R\} \times \configs$ to $(z',c') \in \{B,R\} \times \configs$ if either 
    \begin{enumerate}
        \item $(z,z') = (B,R)$ and $c=c'$;
        \item $(z,z') = (R,B)$ and there is an edge from $c$ to $c'$ in $\confGraph$; or
        \item $(z,z') = (B,B)$, $c = c_{\textit{init}}$ and $c' \in \Init \setminus \set{c_{\textit{init}}}$.
    \end{enumerate}
Fresh symbols $B$ and $R$ are utilized to represent two versions of each configuration - one in Blue and one in Red. This graph is depicted
    on \Cref{fig:reduction-wf-RTM-to-coloring}.
    Note that $\AutGraph{L}{E}$ is connected and "2-colorable": in fact, it is a directed (possibly infinite) tree with root $(B,c_{\textit{init}})$. 
    
    We claim that $\AutGraph{L}{E}$ is "$2$-regular colorable" if, and only if, the set of "reachable configurations" of $T$ is a regular language. 
    In fact, up to permuting the two-colors, 
  $\AutGraph{L}{E}$ admits a unique 2-coloring, defined by:
    \[
        C_1 ~~ \defeq ~~ \{B\} \times \Reach ~~\cup~~ \{R\} \times (\configs \setminus \Reach)
    \]
    and $C_2$ is the complement of $C_1$.
    If $\Reach$ is regular, then so is $C_1$. Dually, if $C_1$ is regular, then
    $\Reach$ is the set of configurations $c$ such that $(B,c) \in C_1$ and hence is regular.
    It follows that $\AutGraph{\A^*}{E}$ is "$2$-regular colorable" if and only if
    the "reachable configurations" of $T$ are regular, which concludes the proof for $k=2$.



    To prove the statement for any $k>2$, we define $\AutGraph{L}{E_k}$ as the result of adding a $(k-2)$-clique to $\AutGraph{L}{E}$ and adding an edge from every vertex of the clique to every vertex incident to an edge of $E$. This forces the clique to use $k-2$ colors that cannot be used in the remaining part of the graph and the proof is then analogous.

	\proofcase{Upper-bound.} We show that the problem is recursively enumerable. Let us define a $k$-colored automaton like a regular (complete) DFA, except that instead of having
	a set of final states, it has a partition $\langle C_1,\hdots,C_k \rangle$ of its states.
	Such an automaton recognizes a regular coloring $\A^* \to \set{1, \dotsc, k}$.
	Given an "automatic graph" $\AutGraph{L}{R}$---specified by
	 NFA's $\+A_1$ and $\+A_2$ recognizing $L$ and $\lotimes R$  respectively--- and a $k$-colored automaton $\+B$,
	we can build, by a product construction, an NFA $\+A'_2$  which accepts
	all $u \otimes v \in \lotimes R$ such that the color of $u$ is distinct from the color of $v$.
	Then, $\+A'_2$ is equivalent to $\+A_2$ if, and only if, $\+B$ describes a proper "$k$-coloring" 
	of $\AutGraph{L}{R}$. The "RE" upper-bound of the "$k$-regular colorability problem" follows: it 
	suffices to enumerate all $k$-colored automata and check for equivalence.
\end{proof}

Note that this reduction provides an easy way of building
graphs in the shape of \Cref{subfig:reduction-wf-RTM} that are "2-colorable" (in fact, they are trees) but not "2-regular colorable". In fact, we can provide a slightly more
direct construction.

\begin{example}
    \AP\label{ex:tree-not-2-reg-colorable}
    On the alphabet $\A = \{a,b\}$, the tree $\+T$ depicted in \Cref{fig:tree-not-2reg-color} whose set of vertices is $V = a^*b^*$ and whose set 
    of edges is $E = E_{\mathrm{incr}} \cup E_{\mathrm{init}}$, with 
    \begin{align*}
        E_{\mathrm{incr}} & = \{(a^pb^q,\, a^{p+1}b^{q+1}) \mid p,q \in \N\} \\
        E_{\mathrm{init}} & = \{(\varepsilon,\, a^p) \mid p \in \N\} \cup \{(\varepsilon,\, b^q) \mid q \in \N\}, 
    \end{align*}    
    is "automatic" but not "2-regular colorable". 
    Indeed, its only "2-coloring"
    consists in partitioning the vertices of $\+T$ into
    \[
        C = \{a^n b^n \mid n \in 2\N\}
            \cup \{a^p b^q \mid p > q \text{ and $q$ is odd}\}
            \cup \{a^p b^q \mid p < q \text{ and $p$ is odd}\}
    \]
    and its complement $V \setminus C$.
    Let $P = \{a^p b^q \mid p, q \in 2\N\} = (aa)^*(bb)^*$:
    $P$ is regular, yet $C \cap P = \{a^n b^n \mid n \in 2\N\}$ is not.
    Hence, $C$ is not regular, and thus $\+T$ is not "2-regular colorable".
    \qed 
\end{example}

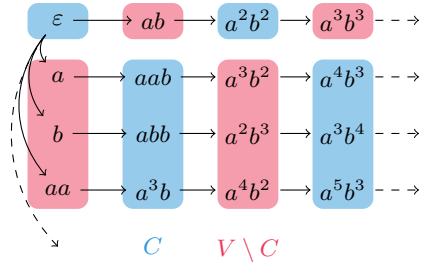
\begin{figure}[htb]
    \centering
    \begin{tikzpicture}[>={Classical TikZ Rightarrow},
	font=\small,
	node distance=.5cm]

	\fill[rounded corners, fill=cBlue, opacity=.5]
		(-.4,-0.25) rectangle (.4,0.23)
		(2.1,-0.25) rectangle (2.9,0.23)
		(0.85,-0.52) rectangle (1.65, -2.5)
		(3.35,-0.52) rectangle (4.15, -2.5);
	
	\fill[rounded corners, fill=cRed, opacity=.5, xshift=1.25cm]
		(-.4,-0.25) rectangle (.4,0.23)
		(2.1,-0.25) rectangle (2.9,0.23);
	\fill[rounded corners, fill=cRed, opacity=.5, xshift=-1.25cm]
		(0.85,-0.52) rectangle (1.65, -2.5)
		(3.35,-0.52) rectangle (4.15, -2.5);

	\node (eps) at (0,0) {$\varepsilon$};
	\node (ab) at (1.25,0) {$ab$};
	\node (aabb) at (2.5,0) {$a^2b^2$};
	\node (aaabbb) at (3.75,0) {$a^3b^3$};
	
	\node (a) at (0,-.75) {$a$};
	\node (aab) at (1.25,-.75) {$aab$};
	\node (aaabb) at (2.5,-.75) {$a^3b^2$};
	\node (aaaabbb) at (3.75,-.75) {$a^4b^3$};
	
	\node (b) at (0,-1.5) {$b$};
	\node (abb) at (1.25,-1.5) {$abb$};
	\node (aabbb) at (2.5,-1.5) {$a^2b^3$};
	\node (aaabbbb) at (3.75,-1.5) {$a^3b^4$};

	\node (aa) at (0,-2.25) {$aa$};
	\node (aaab) at (1.25,-2.25) {$a^3b$};
	\node (aaaabb) at (2.5,-2.25) {$a^4b^2$};
	\node (aaaaabbb) at (3.75,-2.25) {$a^5b^3$};

	\draw[->] (eps) to (ab);
	\draw[->] (ab) to (aabb);
	\draw[->] (aabb) to (aaabbb);
	\draw[->, dashed] (aaabbb) to ($(aaabbb)+(1,0)$);

	\draw[->] (a) to (aab);
	\draw[->] (aab) to (aaabb);
	\draw[->] (aaabb) to (aaaabbb);
	\draw[->, dashed] (aaaabbb) to ($(aaaabbb)+(1,0)$);

	\draw[->] (b) to (abb);
	\draw[->] (abb) to (aabbb);
	\draw[->] (aabbb) to (aaabbbb);
	\draw[->, dashed] (aaabbbb) to ($(aaabbbb)+(1,0)$);

	\draw[->] (aa) to (aaab);
	\draw[->] (aaab) to (aaaabb);
	\draw[->] (aaaabb) to (aaaaabbb);
	\draw[->, dashed] (aaaaabbb) to ($(aaaaabbb)+(1,0)$);

	\draw[->] (eps) edge[bend right=40] (a)
		edge[bend right=40] (b)
		edge[bend right=40] (aa)
		edge[dashed, bend right=40] ($(aa)+(0,-.75)$);

	\node[below = .25cm of aaab, color=cBlue] {$C$}; 
	\node[below = .25cm of aaaabb, color=cRed] {$V \setminus C$}; 
\end{tikzpicture}
    \caption{
        \label{fig:tree-not-2reg-color}
        The "automatic tree@automatic graph" $\+T$ of \Cref{ex:tree-not-2-reg-colorable},
        and its unique "2-coloring" $(C, V\setminus C)$, which is not "regular@@coloring".
    }
\end{figure}
\section{Separability for Bounded Recognizable Relations}

In this section we capitalize on the undecidability result of the previous section, showing how this implies the undecidability for the "separability problem" on two natural classes of bounded "recognizable relations", namely: $\kREC$, and $\kPROD$.
Remember that, for any $k$, $\reintro*\kPROD$ is the subclass of $\REC$ consisting of unions of $k$ cross-products of regular languages (which is a subclass of $\kREC[2^{2k}]$).

\subparagraph*{$\kREC$-separability.} 
First, observe that the $\kREC[1]$-"separability problem" is trivially decidable, since the only possible "separator" is $\A^* \times \A^*$. However, for any other $k>1$, the problem is undecidable.

\begin{proposition}
    The $\kREC$-"separability problem" is undecidable, for every $k>1$.
\end{proposition}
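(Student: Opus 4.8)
The plan is to reduce from the $2$-regular colorability problem, whose undecidability was established in \Cref{thm:k-reg-col-undec}, using the machinery of \Cref{thm:reg-colorability-equiv-separability}. The key observation is part~(3) of that theorem: there is a polynomial-time reduction from the "$k$-regular colorability problem" to the $\kREC$-"separability problem", and moreover the second relation in the produced instance is always the identity $\Id$. Taking $k = 2$ immediately gives undecidability of the $\kREC[2]$-"separability problem" via the chain ``$2$-regular colorability $\leq$ $\kREC[2]$-separability''.

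**For general $k > 1$,** I would reduce the $\kREC[2]$-"separability problem" (equivalently, $2$-regular colorability) to the $\kREC$-"separability problem", rather than invoking part~(3) with arbitrary $k$, since the $k$-regular colorability problem is only asserted undecidable for each fixed $k$ and I want a single clean statement. The cleanest route is to mimic the clique trick already used at the end of the proof of \Cref{thm:k-reg-col-undec}: given an "automatic graph" $\GraphR[E]$ on alphabet $\A$, build a new automatic graph on a slightly enlarged alphabet by adjoining a $(k-2)$-clique whose vertices are forced to receive $k-2$ private colors (by making each clique vertex adjacent to everything), so that $\GraphR[E]$ is "$2$-regular colorable" if and only if the augmented graph is "$k$-regular colorable". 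Then part~(3) of \Cref{thm:reg-colorability-equiv-separability} converts "$k$-regular colorability" of the augmented graph into $\kREC$-separability with second component $\Id$.

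**Concretely,** the forward direction is straightforward: a $k$-regular coloring of the augmented graph restricts to a $2$-regular coloring of $\GraphR[E]$ once the $k-2$ clique-colors are discarded, because no original vertex can share a color with a clique vertex. The converse is equally direct: given a $2$-regular coloring $(V_1, V_2)$ of $\GraphR[E]$, assign the two original colors to the original vertices and one fresh regular color to each of the $k-2$ clique vertices; regularity is preserved since singletons and finite modifications of regular languages are regular. Composing this with the equivalence of \Cref{thm:reg-colorability-equiv-separability}(3) yields the reduction, and the fact that the second relation is $\Id$ is inherited directly.

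**The main obstacle** is bookkeeping rather than conceptual: one must verify that augmenting an automatic graph by a finite clique with the stated adjacencies yields again an "automatic relation" (which follows since the modification is over a finite, effectively describable set of fresh vertices and the original edge relation is automatic), and that the color-count matches exactly so that $k-2$ colors are \emph{used up} by the clique and exactly $2$ remain available for the original graph. Since all these constructions are finite-state and polynomial-time, and \Cref{thm:k-reg-col-undec} supplies the undecidable seed, the proposition follows. In fact the whole argument can be compressed into a single sentence invoking \Cref{thm:reg-colorability-equiv-separability}(3) together with \Cref{thm:k-reg-col-undec}.
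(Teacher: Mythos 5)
Your proposal is correct and, as you yourself note in the final sentence, it compresses to exactly the paper's proof: for each fixed $k > 1$, compose the polynomial-time reduction of \Cref{thm:reg-colorability-equiv-separability}(3) from the "$k$-regular colorability problem" to the $\kREC$-"separability problem" with the undecidability of the former (\Cref{thm:k-reg-col-undec}). The clique-adjoining detour you take for $k > 2$ is unnecessary---the proposition is itself a statement about each fixed $k$, so invoking \Cref{thm:k-reg-col-undec} at that same fixed $k$ is perfectly legitimate, and indeed the clique trick you describe is precisely the device the paper already uses inside the proof of \Cref{thm:k-reg-col-undec} to lift the case $k=2$ to arbitrary $k$, so you would only be re-proving that step.
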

\begin{proof}
A consequence of the reduction from the "$k$-regular colorability problem" of \Cref{thm:reg-colorability-equiv-separability}, combined with the undecidability of the latter for every $k>1$ (\Cref{thm:k-reg-col-undec}).
\end{proof}

\subparagraph*{$\kPROD$-separability.} 
On the $\kPROD$ hierarchy we will find the same phenomenon. In particular the case $k=1$ is also trivially decidable.

\begin{proposition}
    The $\kPROD[1]$-"separability problem" is decidable.
\end{proposition}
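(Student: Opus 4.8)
The plan is to show that the $\kPROD[1]$-"separability problem" is decidable by reducing it to a simple language-theoretic question about regular languages, and ultimately to emptiness of an effectively computable automatic relation. Recall that a $\kPROD[1]$ "separator" is a single product $S = A \times B$ of two regular languages. So the question becomes: given "automatic relations" $R$ and $R'$, is there a product $A \times B$ of regular languages with $R \subseteq A \times B$ and $(A \times B) \cap R' = \emptyset$?

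First I would observe that the containment $R \subseteq A \times B$ forces $A$ and $B$ to be large: writing $\pi_1(R)$ and $\pi_2(R)$ for the projections of $R$ onto its first and second coordinates, the condition $R \subseteq A \times B$ is equivalent to $\pi_1(R) \subseteq A$ and $\pi_2(R) \subseteq B$. Crucially, since we want $A \times B$ to also \emph{avoid} $R'$, making $A$ and $B$ smaller only helps. Hence, \emph{if any product separator exists, then the smallest candidate $A \defeq \pi_1(R)$, $B \defeq \pi_2(R)$ is also a separator.} The key point is that these projections are regular: since $R$ is "automatic", $\lotimes R$ is regular, and projecting an automaton for $\lotimes R$ onto either coordinate (reading the first or second track, erasing padding symbols $\bot$) yields a finite automaton for $\pi_1(R)$ and $\pi_2(R)$ respectively, computable in polynomial time.

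The main body of the argument is then to verify that it suffices to test this single canonical candidate. Concretely, I would prove: $R$ and $R'$ are $\kPROD[1]$-separable if and only if $\bigl(\pi_1(R) \times \pi_2(R)\bigr) \cap R' = \emptyset$. The forward direction uses the minimality observation above: if $A \times B$ separates, then $\pi_1(R) \subseteq A$ and $\pi_2(R) \subseteq B$, so $\pi_1(R) \times \pi_2(R) \subseteq A \times B$, and since $A \times B$ is disjoint from $R'$, so is the smaller product. The backward direction is immediate, since $\pi_1(R) \times \pi_2(R)$ is by construction a $\kPROD[1]$ relation containing $R$, and by hypothesis it is disjoint from $R'$, hence it is itself a separator. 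It remains to check the condition $\bigl(\pi_1(R) \times \pi_2(R)\bigr) \cap R' = \emptyset$ algorithmically: the product $\pi_1(R) \times \pi_2(R)$ is "recognizable", hence "automatic", so its intersection with the "automatic relation" $R'$ is again automatic and effectively computable, and testing emptiness of an automatic relation reduces to testing emptiness of a regular language, which is decidable.

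I do not anticipate a serious obstacle here, as the result is genuinely easy; the only point requiring minor care is confirming that the projection of an automaton recognizing $\lotimes R$ onto a single coordinate is regular despite the padding symbols $\bot$, but this is standard (one treats $\bot$ in the output track as the empty string / end-of-word marker). The conceptual heart is simply the monotonicity observation that shrinking a product separator toward the projections of $R$ preserves the separation property, which collapses the search over all products to a single effective emptiness check.
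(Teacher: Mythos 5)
Your proof is correct and takes essentially the same approach as the paper: the paper's argument is precisely that a $\kPROD[1]$ "separator" exists if and only if the canonical candidate $\pi_1(R)\times\pi_2(R)$ is itself a "separator". You merely spell out the monotonicity justification and the effectiveness of the final emptiness check, both of which the paper leaves implicit.
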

\begin{proof}
    Given two "automatic relations" $R_1, R_2$, there exists $S \in $ \kPROD[1]
    that "separates" $R_1$ from $R_2$ if and only if $\pi_1(R_1)\times \pi_2(R_1)$
    "separates" $R_1$ from $R_2$.
\end{proof}

As soon as $k>1$, the $\kPROD$ "separability problem" becomes undecidable. This is a consequence of the following simple lemma.

\begin{lemma}
    A symmetric "automatic relation" $R$ and the identity $\Id$ are "separable" by a relation in $\kPROD[2]$ iff they have a "separator" of the form $(A \times B) \cup (B \times A)$.
\end{lemma}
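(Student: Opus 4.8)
The \emph{if} direction is immediate: a relation of the form $(A \times B) \cup (B \times A)$ is a union of two products of regular languages, hence lies in $\kPROD[2]$; so exhibiting such a separator directly witnesses $\kPROD[2]$-separability of $R$ and $\Id$.

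For the \emph{only if} direction, suppose $R$ and $\Id$ are separated by some $S = (C \times D) \cup (E \times F) \in \kPROD[2]$, where $C,D,E,F$ are regular languages. The plan is to \emph{symmetrize} $S$ and then exploit disjointness from $\Id$ to collapse the resulting products into the desired symmetric shape. First I would record the two consequences of $S \cap \Id = \emptyset$: since $(w,w) \notin S$ for every $w \in \A^*$, we must have $C \cap D = \emptyset$ and $E \cap F = \emptyset$. Next, writing $S^{-1} = \set{(v,u) \mid (u,v) \in S} = (D \times C) \cup (F \times E)$, I would observe that $S \cap S^{-1}$ is again a "separator" of $R$ from $\Id$: it is contained in $S$, hence disjoint from $\Id$, and since $R$ is symmetric we have $R = R^{-1} \subseteq S^{-1}$, so that $R \subseteq S \cap S^{-1}$.

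The key computation is then to expand $S \cap S^{-1}$ as an intersection of two unions, yielding four products. The two ``diagonal'' terms contribute $(C \cap D) \times (D \cap C)$ and $(E \cap F) \times (F \cap E)$, both empty by the disjointness conditions recorded above. The two surviving ``cross'' terms are $(C \cap F) \times (D \cap E)$ and $(E \cap D) \times (F \cap C)$. Setting $A = C \cap F$ and $B = D \cap E$ (both regular, as finite intersections of regular languages), these are exactly $A \times B$ and $B \times A$, so $S \cap S^{-1} = (A \times B) \cup (B \times A)$ is a separator of the required form.

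There is no serious obstacle here; the only two points requiring care are verifying that intersecting with $S^{-1}$ preserves separation---this is precisely where the symmetry hypothesis on $R$ is used---and checking that the diagonal products vanish, which is exactly the content of disjointness from $\Id$.
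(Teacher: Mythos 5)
Your proof is correct and follows essentially the same route as the paper's: symmetrize the separator to $S \cap S^{-1}$ (using symmetry of $R$ for containment and $S \cap \Id = \emptyset$ for disjointness), then expand the intersection into four products and kill the two diagonal ones via the disjointness consequences of $S \cap \Id = \emptyset$. Up to renaming ($C,D,E,F$ versus the paper's $A_1,B_1,B_2,A_2$), the computation and the resulting separator $(A \times B) \cup (B \times A)$ are identical, and your explicit treatment of the trivial \emph{if} direction is a harmless addition.
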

\begin{proof}
    Assume that $S \in \kPROD[2]$ "separates" $R$ from $\Id$.
    Then $R \subseteq S$, but since $R$ is symmetric, $R = R^{-1} \subseteq S^{-1}$ so
    $R \subseteq S \cap S^{-1}$, and hence $R \subseteq S \cap S^{-1}$.
    Moreover, since $S$ has a trivial intersection with $\Id$, so does $S \cap S^{-1}$.
    Hence, $S \cap S^{-1}$ "separates" $R$ from $\Id$.

    Since $S \in \kPROD[2]$, there exists $A_1,A_2,B_1,B_2 \subseteq \A^*$ such that
    $S = A_1 \times B_1 \cup B_2 \times A_2$.
    Note that $S \cap \Id = \emptyset$ yields $A_i \cap B_i = \emptyset$ for each $i \in \{1,2\}$.
    Finally:
    \begin{align*}
        S \cap S^{-1} &
        =
            \bigl( A_1 \times B_1 \cup B_2 \times A_2 \bigr)
            \cap \bigl( B_1 \times A_1 \cup A_2 \times B_2 \bigr) \\
        &
        =
            \bigl( (A_1 \times B_1) \cap (B_1 \times A_1) \bigr)
            \cup \bigl( (A_1 \times B_1) \cap (A_2 \times B_2) \bigr) \\
        &
        \hphantom{=\;} \cup \bigl( (B_2 \times A_2) \cap (B_1 \times A_1) \bigr)
            \cup \bigl( (B_2 \times A_2) \cap (A_2 \times B_2) \bigr) \\
        &
        =
            \bigl( \overbrace{(A_1 \cap B_1) \times (A_1 \cap B_1)}^{= \emptyset} \bigr)
            \cup \bigl( (A_1 \cap A_2) \times (B_1 \cap B_2) \bigr) \\
        &
        \hphantom{=\;} \cup \bigl( (B_1 \cap B_2) \times (A_1 \cap A_2) \bigr)
            \cup \bigl( \underbrace{(A_2 \cap B_2) \times (A_2 \cap B_2)}_{= \emptyset} \bigr)\\
        &
        = \bigl( (A_1 \cap A_2) \times (B_1 \cap B_2) \bigr)
            \cup \bigl( (B_1 \cap B_2) \times (A_1 \cap A_2) \bigr).\qedhere
    \end{align*}
\end{proof}

We can then establish the following: 

\begin{corollary}\AP\label{cor:2reg-2prod}
    A symmetric "automatic relation" $R$ and $\Id$ are "separable" by a relation in $\kPROD[2]$ if{f} $\AutGraph{\A^*}{R}$ is "$2$-regular colorable".
\end{corollary}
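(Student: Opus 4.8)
The plan is to derive this corollary almost immediately from the preceding lemma, which already reduces $\kPROD[2]$-separability of a symmetric $R$ from $\Id$ to the existence of a "separator" of the special shape $(A \times B) \cup (B \times A)$ with $A,B$ regular (the regularity coming from $A = A_1 \cap A_2$, $B = B_1 \cap B_2$ in the lemma's proof). What remains is to match such separators with "2-regular coloring"s of $\GraphR$, exploiting the fact noted earlier that $\incompGraph{R}{\Id}$ is exactly $\GraphR$.

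For the direction from colorability to separability, I would start from a "2-regular coloring" $V_1 \dcup V_2 = \A^*$ of $\GraphR$ and take $S \defeq (V_1 \times V_2) \cup (V_2 \times V_1)$. Since the $V_i$ are regular, $S$ is a union of two products of regular languages and hence lies in $\kPROD[2]$. Because $R$ is symmetric and the coloring is proper, every edge $(u,v) \in R$ joins the two colors, so $R \subseteq S$; and $V_1 \cap V_2 = \emptyset$ gives $S \cap \Id = \emptyset$. Hence $S$ "separates" $R$ from $\Id$.

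For the converse, I would invoke the preceding lemma to obtain a "separator" $S = (A \times B) \cup (B \times A)$ with $A,B$ regular. The first step is to observe $A \cap B = \emptyset$: any $w \in A \cap B$ would put $(w,w) \in S$, contradicting $S \cap \Id = \emptyset$. I would then set $V_1 \defeq A$ and $V_2 \defeq \A^* \setminus A$, a partition of $\A^*$ into two regular languages. The key verification is that this is a proper coloring: for every edge $(u,v) \in R \subseteq S$ we have $(u,v) \in A \times B$ or $(u,v) \in B \times A$, and since $A$ and $B$ are disjoint, in both cases exactly one of $u,v$ lies in $A$; thus $u$ and $v$ receive different colors and no edge is monochromatic. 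Therefore $\GraphR$ is "2-regular colorable".

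I expect the only delicate point to be this backward direction, specifically the passage from a separator to an honest two-block partition: $A \cup B$ need not cover $\A^*$, so the vertices outside $A \cup B$ must still be assigned a color. Folding them all into $V_2 = \A^* \setminus A$ is harmless because, as the monochromatic-edge check shows (together with the symmetry of $R$), every such vertex is isolated in $\GraphR$; the argument above handles this uniformly by tracking only membership in $A$. Apart from this bookkeeping, both implications are routine once the lemma is in hand.
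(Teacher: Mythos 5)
Your proof is correct and takes essentially the same route as the paper: the corollary is derived from the preceding lemma by identifying separators of the form $(A \times B) \cup (B \times A)$ with proper $2$-regular colorings of $\GraphR$, which is exactly the one-line observation the paper gives as its proof. Your additional care on the backward direction---noting that $A \cup B$ need not cover $\A^*$, that the leftover vertices are necessarily isolated, and that one can therefore color by $A$ versus $\A^* \setminus A$---fills in a detail the paper's terse proof glosses over, and you handle it correctly.
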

\begin{proof}
    By observing that for any symmetric relation $R \subseteq \A^* \times \A^*$, we have that $A,B \subseteq \A^*$ is a coloring of $\AutGraph{\A^*}{R}$ if, and only if, $(A \times B) \cup (B \times A)$ "separates" $R$ from $\Id$.
\end{proof}

We can now easily show undecidability for the $\kPROD[2]$ "separability problem" by reduction from the "$2$-regular colorability problem".
\begin{lemma}\AP\label{lem:aut-2prod-sep-undec}
    The $\kPROD[2]$-"separability problem" is undecidable.
\end{lemma}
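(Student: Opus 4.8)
The plan is to reduce from the "$2$-regular colorability problem" on "automatic graphs", which we know to be undecidable by \Cref{thm:k-reg-col-undec}. The reduction is almost immediate given \Cref{cor:2reg-2prod}, but that corollary only applies to \emph{symmetric} automatic relations, so the main task is to package an arbitrary instance of the colorability problem into one where the relevant relation is symmetric.

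First I would recall that the "$2$-regular colorability problem" asks, given an "automatic relation" $E$, whether $\GraphR[E]$ is "$2$-regular colorable". The notion of "$2$-coloring" used throughout the paper only depends on the \emph{symmetric closure} of the edge relation: a partition $V_1,V_2$ is a proper coloring of $\GraphR[E]$ iff $(V_i \times V_i) \cap E = \emptyset$ for each $i$, which is unchanged if we replace $E$ by its symmetric closure $E \cup E^{-1}$. So, given an instance $E$ of the "$2$-regular colorability problem", I would set $R \defeq E \cup E^{-1}$. This relation $R$ is symmetric, and it is "automatic" since the class of "automatic relations" is closed under union and under taking the inverse (reversing the order of the two tracks of a synchronous automaton). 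Crucially, $\GraphR$ has exactly the same "$2$-regular colorings" as $\GraphR[E]$: a partition into regular languages is a proper $2$-coloring of one iff it is of the other.

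Now the reduction is closed by invoking \Cref{cor:2reg-2prod}: since $R$ is a symmetric "automatic relation", $R$ and $\Id$ are "separable" by a relation in $\kPROD[2]$ if and only if $\GraphR$ is "$2$-regular colorable", which in turn holds iff the original graph $\GraphR[E]$ is "$2$-regular colorable". Thus the map $E \mapsto (R, \Id) = (E \cup E^{-1},\, \Id)$ is a computable reduction from the "$2$-regular colorability problem" to the $\kPROD[2]$-"separability problem", and since the former is undecidable, so is the latter.

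I do not expect any serious obstacle here: the only point requiring a line of justification is the passage to the symmetric closure, and in particular that symmetrizing the edge relation preserves both the set of proper $2$-colorings and automaticity. One could even observe that many natural instances produced by \Cref{thm:k-reg-col-undec} are already undirected in the relevant sense, but the symmetric-closure trick makes the reduction work uniformly for any input, so this is the cleanest route. No computation beyond checking the closure properties of "automatic relations" is needed.
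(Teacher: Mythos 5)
Your proposal is correct and takes essentially the same route as the paper's own proof: both reduce from the $2$-regular colorability problem by replacing the input edge relation $E$ with its symmetric closure $E \cup E^{-1}$ (which remains automatic and has exactly the same $2$-regular colorings, since a coloring constraint only depends on the symmetrization), and then invoke \Cref{cor:2reg-2prod} on the pair consisting of this symmetric relation and $\Id$. There is nothing further to add.
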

\begin{proof}
    By reduction from the "$2$-regular colorability problem" on "automatic graphs", which is undecidable by \Cref{thm:k-reg-col-undec}. Let $\AutGraph{L}{R}$ be an "automatic graph" and $\AutGraph{L}{R'}$ the symmetric closure of $\AutGraph{L}{R}$. It follows that $\AutGraph{L}{R'}$ is still "automatic" and that there is a "$2$-regular coloring" for $\AutGraph{L}{R'}$ if{f} there is a "$2$-regular coloring" for $\AutGraph{L}{R}$ (the same coloring in fact).
    Thus, by \Cref{cor:2reg-2prod}, $\AutGraph{L}{R}$ is "$2$-regular colorable" if{f} 
    there is a $\kPROD[2]$ relation that "separates" $R'$ from $\Id$.
\end{proof}

Further, this implies undecidability for every larger $k$:
\kprodundecidable
\begin{figure}[htb]
    \centering
    \scalebox{1.2}{
        \begin{tikzpicture}[>=stealth, use Hobby shortcut]
	\newcommand{\curveA}{(0,0) .. (.5,-.3) .. (2,2) .. (1.1,.8)}
	\newcommand{\curveB}{(.65,.9) .. (.55, 2) .. (-.1, 1.6) .. (-.9, 1.4) .. (-1.3, 1) .. (-1.1, .5) .. (0, 1) .. (.45, .9)};

	\draw[rounded corners=2pt, draw=cGrey, fill=cGrey, opacity=.5] (1.14,.85) rectangle (-.15,2.1);
	\draw[rounded corners=2pt, draw=cGrey, fill=cGrey, opacity=.5] (-.15,1.7) rectangle (-1.4,.4);

	\draw[
		closed,
		fill=cRed,
		draw=cRed,
		opacity=.5
	] \curveA;

	\draw[
		closed,
		fill=cBlue,
		draw=cBlue,
		opacity=.5
	] \curveB;

	\node[cRed, font=\small] at (1.5, -.6) {$R_2$};
	\node[cBlue, font=\small] at (-.7, 2) {$R_1$};
	\node[cGrey, font=\small] at (-1.65, 1.05) {$S$};
	
	\foreach \x in {0,...,2} {
		\coordinate (ca\x) at ($(3.75,1.25)+(.75*\x,0)$);
		\coordinate (cb\x) at ($(3.75,.5)+(.75*\x,0)$);
	}

	\draw[rounded corners=2pt, draw=cGrey, fill=cGrey, opacity=.5]
		($(ca0)+(0,-.1)$) rectangle ($(cb0)+(-.15,.1)$)
		($(ca1)+(0,-.1)$) rectangle ($(cb1)+(-.15,.1)$)
		($(ca2)+(0,-.1)$) rectangle ($(cb2)+(-.15,.1)$);
	\node[cGrey, font=\small, right] at (5.4, .875) {$S'\setminus S$};

	\foreach \x in {0,...,2} {
		\node[circle, fill, inner sep=1pt, outer sep=2pt] (a\x) at (ca\x) {};
			\node[above = 0cm of a\x, font=\small] {$a_{\x}$};
		\node[circle, fill, inner sep=1pt, outer sep=2pt] (b\x) at (cb\x) {};
			\node[below = 0cm of b\x, font=\small] {$b_{\x}$};
	}

	\foreach \x in {1,2} {
		\draw[<->, cRed] (a0) edge (b\x);
	}
	\foreach \x in {0,2} {
		\draw[<->, cRed] (a1) edge (b\x);
	}
	\foreach \x in {0,1} {
		\draw[<->, cRed] (a2) edge (b\x);
	}
	\foreach \x in {0,...,2} {
		\draw[->, cBlue, bend right=20] (a\x) edge (b\x);
		\draw[->, cRed, bend right=20] (b\x) edge (a\x);
	}

	\draw[<-, cRed] (1.44, 2.46) .. (1.64, 2.56) .. (2, 2.55) ..  ($(a0)+(-.2,0)$) .. ($(a0)+(-.08,0)$);
	\foreach \x in {1,2} {
		\draw[-, cRed] (1.45, 2.465) .. (1.64, 2.56) .. (2, 2.55) ..  ($(a\x)+(-.2,0)$) .. ($(a\x)+(-.08,0)$);
	}
	\node[circle, draw, inner sep=1pt, outer sep=1pt, fill=white] at (1.37, 2.45) {};
	\node[below=-2pt, font=\tiny, align=center] at (1.37, 2.45) {$u \in \A^*$};

	\foreach \x in {0,...,2} {
		\draw[<-, cRed] ($(b\x)+(-.08,-.05)$) .. ($(b\x)+(-.2,-.25)$) .. ($(b\x)+(-.3,-.7)$) .. (3.2, -.5) ..  (2.85, -.55) .. (2.65, -.55);
	}
	\node[circle, draw, inner sep=1pt, outer sep=1pt, fill=white] at (2.6, -.6) {};
	\node[below=-2pt, font=\tiny, align=center] at (2.6, -.6) {$v \in \A^*$};

\end{tikzpicture}
    }
    \caption{
        \AP\label{fig:2prod-to-kprod}
        Construction in the proof of \Cref{thm:kprod-undecidable} for $k = 5$. $S$ is depicted as the union of two (gray) rectangles since $S \in \kPROD[2]$.
        The relation $R'_1$ is obtained from $R_1$ (blue shape) by adding all blue edges,
        namely $(a_i, b_i)$ for $1\leq i \leq k-2$. The relation $R'_2$ is obtained from $R_2$ (red shape) by adding
        all red edges, namely every other edge involving a vertex $a_i$ or $b_i$.
        Finally, $S'$ (five gray rectangles) is obtained from $S$ by adding
        each $\{a_i\} \times \{b_i\}$.
    }
\end{figure}
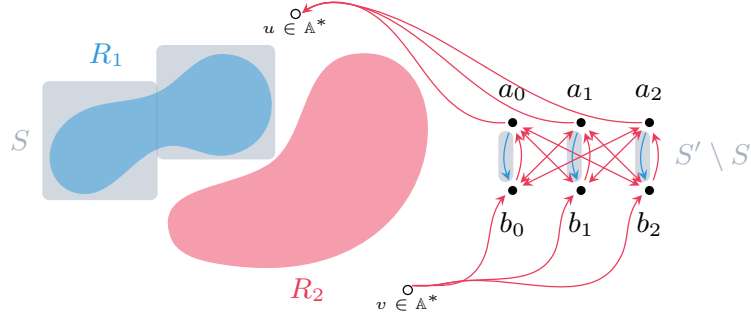

\begin{proof}
    The case $k=2$ is shown in \Cref{lem:aut-2prod-sep-undec}, so suppose $k>2$.
    The proof goes by reduction from the $\kPROD[2]$-"separability problem". Let $R_1,R_2$ be a pair of "automatic relations" over an alphabet $\A$. Consider the alphabet extended with $2(k-2)$ fresh symbols $\A' = \A \dcup \set{a_1, \dotsc, a_{k-2}, b_1, \dotsc, b_{k-2}}$. We build "automatic relations" $R'_1,R'_2$ over $\A'$ such that $(R_1,R_2)$ are $\kPROD[2]$ "separable" over $\A$ if{f} $(R'_1,R'_2)$ are $\kPROD$ "separable" over $\A'$.

    Let $R'_1 = R_1 \dcup \set{ (a_i,b_i) : 1 \leq i \leq k-2}$ and 
    \begin{align*}
    R'_2 =  R_2 \dcup {}&\set{(a_i,w) : w \in \A^*, 1 \leq i \leq k-2} \dcup {}\\
                    &\set{(w,b_i) : w \in \A^*, 1 \leq i \leq k-2} \dcup {}\\
                    &\set{(a_i,b_j) : 1 \leq i,j \leq k-2, i \neq j} \dcup {}\\
                    &\set{(b_i,a_j) : 1 \leq i,j \leq k-2}
    \end{align*}
    
    If $(R_1,R_2)$ has a $\kPROD[2]$ "separator" $S$, then $\tilde S \dcup \set{(a_i,b_i) : 1 \leq i \leq k-2}$ is a $\kPROD$ "separator" of $(R'_1,R'_2)$.

    Conversely, if $S' = (A_1 \times B_1) \cup \dotsb \cup (A_k \times B_k)$ is a $\kPROD$ "separator" of $(R'_1,R'_2)$, then for every $i$ there must be some $j_i$ such that $A_{j_i} \times B_{j_i}$ contains $(a_i,b_i)$. Observe that 
    \begin{itemize}
        \item $A_{j_i} \cup B_{j_i}$ cannot contain any $a_{i'}$ or $b_{i'}$ for $i' \neq i$, and
        \item $A_{j_i} \cup B_{j_i}$ cannot contain any $w \in \A^*$;
    \end{itemize}
    since otherwise we would have $(A_{j_i} \times B_{j_i}) \cap R'_2 \neq \emptyset$.
    Hence, $\set{i \mapsto j_i}_i$ is injective, and thus $S'$ is of the form $S' = (A_1 \times B_1) \cup (A_2 \times B_2)  \cup (\set{a_1} \times \set{b_1}) \cup \dotsb \cup (\set{a_{k-2}} \times \set{b_{k-2}})$. We can further assume that $A_1,B_1,A_2,B_2$ do not contain any $a_i$ or $b_i$ since otherwise we can remove them preserving the property of being a $\kPROD$ "separator" of $R'_1$ and $R'_2$.
    Hence, $S \defeq (A_1 \times B_1) \cup (A_2 \times B_2)$ must cover $R_1$ and be disjoint from $R_2$, obtaining that $S$ is a $\kPROD[2]$ "separator" of $R_1$ and $R_2$.
\end{proof}


\section{Definability for Bounded Recognizable Relations}
Up until now, we have examined two hierarchies of bounded recognizable relations, namely $\kPROD$ and $\kREC$. 
Our previous analysis demonstrated that, for any element in these hierarchies (where $k>1$), the "separability problem" is undecidable. Nevertheless, 
we will now establish that the "definability problem" is decidable.

\knowledgenewrobustcmd{\autequiv}[1][R]{\cmdkl{\sim_{#1}}}%
\AP
Given an "automatic" relation $R \subseteq \A^* \times \A^*$, consider the "automatic" equivalence relation $\intro*\autequiv \subseteq \A^* \times \A^*$, defined as $w \mathrel{\autequiv} w'$ if for every $v \in \A^*$ we have 
\begin{enumerate}
    \item $(w,v) \in R$ if{f} $(w',v) \in R$, and
    \item $(v,w) \in R$ if{f} $(v,w') \in R$.
\end{enumerate}

It turns out that equivalence classes of $\autequiv$ define the coarsest partition onto which $R$ can be recognized in terms of $\kREC$:

\begin{lemma}\AP\label{lem:krec-characterization}
    For every "automatic" $R \subseteq \A^* \times \A^*$, $\autequiv$ has index at most $k$ if, and only if, $R$ is in $\kREC$.
\end{lemma}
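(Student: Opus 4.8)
The plan is to prove both implications directly, exploiting the fact that $\autequiv$ is a congruence that \emph{saturates} $R$ on both coordinates. The only genuinely non-bookkeeping ingredient is the regularity of the $\autequiv$-classes, which I isolate below; everything else is a mechanical unwinding of the definitions of $\autequiv$ and $\kREC$.

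First I would record the key property: \emph{if $w \mathrel{\autequiv} w'$ and $v \mathrel{\autequiv} v'$, then $(w,v) \in R$ iff $(w',v') \in R$}. This follows by applying condition (1) of the definition of $\autequiv$ to $w,w'$ with second coordinate $v$ (giving $(w,v)\in R \Leftrightarrow (w',v)\in R$) and condition (2) to $v,v'$ with first coordinate $w'$ (giving $(w',v)\in R \Leftrightarrow (w',v')\in R$), then chaining the two equivalences. As a consequence, $R$ is a union of full blocks $C \times C'$, where $C, C'$ range over $\autequiv$-classes: whenever $(w,v)\in R$, the entire class-product $C\times C'$ containing $(w,v)$ lies in $R$.

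For the forward direction, assume $\autequiv$ has index $m \le k$, with classes $C_1,\dots,C_m$. By the saturation property, $R = \bigcup_{(i,j)\in I} C_i\times C_j$ where $I = \{(i,j) : (C_i\times C_j)\cap R \neq \emptyset\}$. It remains to see that each $C_i$ is regular. Picking any representative $r_i \in C_i$, we have $C_i = \{w : (w,r_i)\in {\autequiv}\}$, and since $\autequiv$ is automatic (i.e.\ $\lotimes{\autequiv}$ is regular), fixing the second coordinate to the constant word $r_i$ yields a regular language by the standard section/residual construction on the synchronous automaton. Hence $C_1,\dots,C_m$ is a regular partition of $\A^*$; padding it with $k-m$ empty (regular) blocks when $m<k$ produces a partition into $k$ regular languages witnessing $R \in \kREC$.

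For the backward direction, assume $R\in\kREC$, so $R = \bigcup_{(i,j)\in I} L_i\times L_j$ for a regular partition $L_1,\dots,L_k$ of $\A^*$. I would show that any two words in the same block $L_i$ are $\autequiv$-equivalent, which bounds the index of $\autequiv$ by $k$. Fix $w,w'\in L_i$ and $v\in\A^*$: since the $L_j$'s partition $\A^*$, the word $w$ lies in $L_i$ only, so $(w,v)\in R$ iff there is $j$ with $(i,j)\in I$ and $v\in L_j$, a condition that does not mention $w$ versus $w'$; thus $(w,v)\in R \Leftrightarrow (w',v)\in R$, which is condition (1). Condition (2) is symmetric, using that $w,w'$ occupy the same block $L_i$ on the second coordinate. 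Therefore each $L_i$ falls inside a single $\autequiv$-class, and $\autequiv$ has at most $k$ classes. The main (and only) obstacle is the regularity of the classes in the forward direction, which I expect to dispatch cleanly via the fact that sections of an automatic relation along a fixed word are regular.
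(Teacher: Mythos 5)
Your proof is correct and takes essentially the same route as the paper's: the saturation property (that $R$ is a union of full products of $\mathrel{\autequiv}$-classes) gives the forward direction, and the observation that any witnessing regular partition refines $\mathrel{\autequiv}$ gives the backward one. The only difference is that you explicitly justify regularity of the classes via sections of the automatic relation $\mathrel{\autequiv}$, a point the paper leaves implicit in its claim that $\mathrel{\autequiv}$ is automatic.
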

\begin{proof}
    \proofcase{Left-to-right}
    Assume that $\autequiv$ has the equivalence classes $E_1, \dotsc, E_k$. Consider the set $P \subseteq \set{1, \dotsc, k}^2$ of all pairs $(i,j)$ such that there are $u_i \in E_i$ and $u_j \in E_j$ with $(u_i,u_j) \in R$. Define the $\kREC$ relation $R' = \bigcup_{(i,j) \in P} E_i \times E_j$. We claim that $R=R'$. 
    In fact, by definition of $\autequiv$, note that if there are $u_i \in E_i$ and $u_j \in E_j$ with $(u_i,u_j) \in R$, then $E_i \times E_j \subseteq R$. Hence, $R' \subseteq R$.
    On the other hand, for every pair $(u,v) \in R$ there is $(i,j) \in P$ such that $u \in E_i$, $v \in E_j$ implying $(u,v) \in R'$.
    Hence, $R \subseteq R'$.

    \proofcase{Right-to-left}
    If $R$ is a union of products of sets from the partition $E_1 \dcup \dotsb \dcup E_k = \A^*$, then every two elements of each $E_i$ are $\autequiv$-related, and thus $\autequiv$ has index at most $k$.
\end{proof}

We can then conclude that the definability problem for $\kREC$ is decidable. 

\begin{corollary}
    The $\kREC$-"definability problem" is decidable, for every $k > 0$.
\end{corollary}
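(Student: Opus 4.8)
The plan is to reduce the problem, via \Cref{lem:krec-characterization}, to a decidable emptiness test on "automatic relations". Given an "automatic" relation $R \subseteq \A^* \times \A^*$, that lemma tells us that $R$ lies in $\kREC$ if and only if the equivalence relation $\autequiv$ has index at most $k$. Hence it suffices to (i) effectively compute an automaton recognizing $\autequiv$, and (ii) decide whether this automatic equivalence relation has at most $k$ classes.

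For step (i), observe that $\autequiv$ is first-order definable from $R$: by definition $w \autequiv w'$ holds exactly when $\forall v\,\bigl( ((w,v)\in R \leftrightarrow (w',v)\in R) \wedge ((v,w)\in R \leftrightarrow (v,w')\in R) \bigr)$. Since "automatic relations" are effectively closed under Boolean operations and under projection---and therefore under universal quantification---we can construct from any automaton for $R$ an automaton for $\autequiv$, in the same spirit as \Cref{lem:incomp-is-automatic}. The excerpt already asserts that $\autequiv$ is automatic; here the point to record is that the construction is effective.

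For step (ii), note that the index of $\autequiv$ exceeds $k$ precisely when there exist $k+1$ pairwise non-equivalent words, i.e.\ when the sentence $\exists w_0 \cdots \exists w_k \bigwedge_{0 \leq i < j \leq k} \neg (w_i \autequiv w_j)$ holds. Concretely, the set of $(k+1)$-tuples $(w_0,\dots,w_k)$ with $w_i \not\autequiv w_j$ for all $i<j$ is itself an "automatic" relation of arity $k+1$, obtained effectively as a finite intersection of relations gotten by pulling back the (automatic) complement of $\autequiv$ along the relevant pairs of coordinates. Testing this relation for non-emptiness is decidable: if it is empty then $\autequiv$ has index at most $k$ and $R \in \kREC$; otherwise the index exceeds $k$ and $R \notin \kREC$.

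The only genuinely delicate point is that all of these constructions must be carried out effectively for a fixed but arbitrary $k$. Building the automaton for $\autequiv$ requires a universal quantifier, hence a complementation step, and lifting the pairwise-inequality test to arity $k+1$ takes us outside the binary setting to which the paper otherwise restricts; both are standard facts about automatic structures, but they are the steps one must double-check. Once they are in place, decidability of the $\kREC$-"definability problem" follows immediately from \Cref{lem:krec-characterization}.
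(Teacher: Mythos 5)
Your proof is correct and follows essentially the same route as the paper's: both apply \Cref{lem:krec-characterization} and then reduce the question to testing whether the complement of $\autequiv$ contains $k+1$ pairwise non-equivalent words (a $(k+1)$-clique), decided via an emptiness test on automatic relations. The paper merely states that this clique test "can be easily tested," whereas you spell out the effectiveness details (closure of automatic relations under Boolean operations, projection, and lifting to arity $k+1$); this is elaboration, not a different argument.
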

\begin{proof}
    An "automatic relation" $R$ is in $\kREC$ if{f} $\autequiv$ has at most $k$ equivalence classes by \Cref{lem:krec-characterization}. 
    %
    In other words, an "automatic relation" $R$ is not in $\kREC$ if{f} the complement of $\autequiv$ contains a $(k+1)$-clique, which can be easily tested.
\end{proof}

The relation $\autequiv$ can also be used to characterize which automatic relations are definable in the class $\kPROD$.

\begin{lemma}
    An "automatic relation" $R$ is in $\kPROD$ if, and only if, $R=(A_1 \times B_1) \cup \dotsb \cup (A_k \times B_k)$ where each $A_i$ and $B_i$ is a union of equivalence classes of $\autequiv$.
\end{lemma}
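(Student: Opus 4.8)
The plan is to derive both implications from the interaction between the product decomposition of $R$ and the equivalence $\autequiv$, the key tool being the \emph{saturation} of the factors. For the backward implication, suppose $R = \bigcup_{i=1}^k A_i \times B_i$ where every $A_i$ and $B_i$ is a union of $\autequiv$-classes. I would first observe that such a decomposition already forces $\autequiv$ to have finite index: the set $\{v : (w,v) \in R\}$ equals $\bigcup_{i : w \in A_i} B_i$ and the set $\{v : (v,w) \in R\}$ equals $\bigcup_{i : w \in B_i} A_i$, so the $\autequiv$-class of $w$ is determined by the pair $(\{i : w \in A_i\},\, \{i : w \in B_i\})$ of subsets of $\{1,\dots,k\}$, giving at most $2^{2k}$ classes (this is exactly the inclusion $\kPROD \subseteq \kREC[2^{2k}]$ mentioned earlier, and is consistent with \Cref{lem:krec-characterization}). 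Since $\autequiv$ is automatic each class is regular, and since there are finitely many, every $A_i$ and $B_i$—being a union of classes—is a finite union of regular languages, hence regular. Therefore $R \in \kPROD$.

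For the forward implication, let $R = \bigcup_{i=1}^k A_i \times B_i$ with each $A_i, B_i$ regular. For $X \subseteq \A^*$ write $\widetilde X$ for its $\autequiv$-saturation $\{w' : \exists w \in X,\ w \mathrel{\autequiv} w'\}$. I would replace each product $A_i \times B_i$ by $\widetilde{A_i} \times \widetilde{B_i}$ and prove that $R = \bigcup_{i=1}^k \widetilde{A_i} \times \widetilde{B_i}$. Each $\widetilde{A_i}$ is a union of classes by construction, and is regular because it is the image of the regular language $A_i$ under the automatic relation $\autequiv$ (and likewise for $\widetilde{B_i}$). The inclusion $R \subseteq \bigcup_i \widetilde{A_i} \times \widetilde{B_i}$ is immediate since $A_i \subseteq \widetilde{A_i}$ and $B_i \subseteq \widetilde{B_i}$. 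The reverse inclusion is the crux: I must show $\widetilde{A_i} \times \widetilde{B_i} \subseteq R$ for each $i$. Fixing $w' \in \widetilde{A_i}$ and $v' \in \widetilde{B_i}$, I choose $w \in A_i$ with $w \mathrel{\autequiv} w'$ and $v \in B_i$ with $v \mathrel{\autequiv} v'$; then $(w,v) \in A_i \times B_i \subseteq R$, and applying the definition of $\autequiv$ twice—once on the first coordinate, where $w \mathrel{\autequiv} w'$ yields $(w',v) \in R$, and once on the second, where $v \mathrel{\autequiv} v'$ yields $(w',v') \in R$—gives $(w',v') \in R$. Since this replacement keeps the number of products equal to $k$, we obtain the desired expression of $R$ with class-union factors, so $R \in \kPROD$.

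The main obstacle is precisely the inclusion $\widetilde{A_i} \times \widetilde{B_i} \subseteq R$: it is here that the hypothesis $A_i \times B_i \subseteq R$ (valid because each summand lies in the union defining $R$) combines with the two defining conditions of $\autequiv$ to let us slide each coordinate freely within its class without leaving $R$. Everything else—finiteness of the index in the backward direction, and regularity of the saturations as images of regular languages under an automatic relation—is routine. Crucially, the saturation step preserves the number of products, which is exactly what makes the statement go through for $\kPROD$ rather than only for $\REC$.
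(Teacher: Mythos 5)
Your proof is correct and takes essentially the same route as the paper: the heart of both arguments is saturating the factors $A_i$, $B_i$ by the $\autequiv$-classes they meet and using the definition of $\autequiv$ to show the saturated products remain inside $R$ (the paper does this incrementally, one class into one factor at a time, while you saturate everything at once and apply the equivalence twice, once per coordinate). You additionally spell out the converse direction---the index-set argument bounding the index of $\autequiv$ by $2^{2k}$, hence regularity of class-union factors---which the paper's terse proof leaves implicit, but this is a completion of the same argument rather than a different approach.
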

\begin{proof}
    It suffices to show that for every equivalence class $E$ from $\autequiv$, if $A_1 \cap E \neq \emptyset$ then $R = ((A_1 \cup E) \times B_1) \cup \dotsb \cup (A_k \times B_k)$, and similarly for $B_1$. Assume $w \in A_1 \cap E$ and take any pair $(u,v) \in E \times B_1$. We show that $(u,v) \in R$. By definition of $\autequiv$, since $(w,v) \in R$ and $w \mathrel{\autequiv} u$, we have that $(u,v) \in R$.
\end{proof}

Again, this characterization allows us to show that definability in the class $\kPROD$ is decidable. 

\kproddefdec

\begin{proof}
    By brute force testing whether the "automatic relation" $R$ is equivalent to $(A_1 \times B_1) \cup \dotsb \cup (A_k \times B_k)$ for every possible $A_i,B_i$ which is a union of equivalence classes of $\autequiv$.
\end{proof}

\section{Discussion}

We have established, among other things, the undecidability of the
"$k$-regular colorability problem" for $k \geq 2$. Yet, little is known about
the "regular colorability problem".

\begin{conjecture}
    The $\REC$-"separability problem"---or, equivalently, the "regular colorability problem"---is undecidable.
\end{conjecture}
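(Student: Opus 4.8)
The plan is to prove the conjecture by reduction from an undecidable problem on reversible Turing machines, reusing and strengthening the machinery already developed for the bounded case. The first step I would take is an \emph{algebraic reformulation} of the "regular colorability problem" that is better suited to an unbounded number of colors. I claim that an "automatic graph" $\GraphR[E]$ is "regular colorable" if, and only if, there is a morphism $\phi \colon \A^* \to M$ into a finite monoid $M$ with $\phi(u) \neq \phi(v)$ for every edge $(u,v)$ of $\GraphR[E]$. Indeed, given such a $\phi$, coloring each word $w$ by $\phi(w)$ yields a "regular coloring" with $|M|$ colors, since each class $\phi^{-1}(m)$ is regular and no edge is monochromatic; conversely, the syntactic congruence of any "regular coloring" is a finite-index congruence refining the partition, hence a morphism of the required kind. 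Thus the problem asks exactly whether some finite monoid \emph{refutes} every edge, equivalently whether there is a finite-index congruence containing $\Id$ and disjoint from $E$; this is precisely the statement that $\Id$ and the symmetric closure of $E$ are $\REC$-"separable", in line with \Cref{thm:reg-colorability-equiv-separability}.

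With this reformulation in hand, the strategy is to reduce from the halting problem for reversible Turing machines (via the "reachable regularity problem" of \Cref{lem:reachable-regularity}), building an "automatic graph" $\GraphR[E]$ that is "regular colorable" exactly when the input machine halts. The template is that of \Cref{thm:k-reg-col-undec}: encode the "configuration graph" so that any valid coloring is forced to separate the "reachable configurations" from the rest. The crucial new ingredient is a family of \emph{rigidity gadgets} whose purpose is to guarantee that \emph{no} finite monoid at all can avoid the edges when the reachable set is non-regular---not merely that two colors fail to suffice. Concretely, on top of the configuration-graph edges one would attach to each configuration an unbounded ``counter'' region (of the form $a^n b^n$, as already used in \Cref{lem:reachable-regularity}), arranged so that any edge-avoiding morphism must assign pairwise distinct values to infinitely many interleaved words unless the reachable set is finite.

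The step I expect to be the main obstacle is precisely the design and correctness of these rigidity gadgets, because finite monoids are remarkably powerful colorers. This is already visible in \Cref{ex:tree-not-2-reg-colorable}: although that tree admits no "2-regular coloring", it \emph{is} "regular colorable", since the morphism tracking length modulo $3$ together with the emptiness test $\set\varepsilon$ versus $\A^* \setminus \set\varepsilon$ separates every edge using finitely many classes. The same phenomenon dooms any naive lift of the bounded construction: any edge set whose endpoints differ in some ``regular'' invariant---length modulo $m$, bounded prefixes or suffixes, factor content, and so on---is automatically avoidable by a finite monoid. Hence the reduction must encode a genuinely non-regular obstruction, one that no finite-monoid invariant can witness, while keeping that obstruction controlled by an undecidable property of the machine and keeping $E$ automatic. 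Overcoming this tension is the heart of the difficulty, and explains why the unbounded case resists the techniques that settle the bounded one; I would therefore invest the bulk of the effort in proving that, whenever the reachable set is non-regular, every morphism into a finite monoid collapses at least one gadget edge, presumably through a pumping or Ramsey argument on the images of the counter words.
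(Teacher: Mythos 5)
The first thing to note is that the paper does not prove this statement: it is stated as a conjecture and explicitly left open (``It is not known to date whether the regular colorability problem is decidable''), so there is no proof of record to compare yours against. The question is therefore whether your argument settles the open problem, and it does not. Your algebraic reformulation is correct and closely mirrors \Cref{thm:reg-colorability-equiv-separability}: a graph admits a regular coloring iff some morphism into a finite monoid distinguishes the endpoints of every edge. But the entire logical weight of your reduction rests on the ``rigidity gadgets'', and you never construct them. You specify only the property they should have --- that whenever $\Reach$ is non-regular, every morphism into a finite monoid must identify the endpoints of some edge --- and defer its justification to ``a pumping or Ramsey argument'' to be found later. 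Naming the needed lemma is not proving it; that lemma \emph{is} the open problem.

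Moreover, your own closing observations explain why such a gadget cannot come from the paper's existing machinery. The construction of \Cref{thm:k-reg-col-undec} ties non-regularity of $\Reach$ only to the nonexistence of \emph{two}-color regular colorings: the graph $\GraphR[E]$ built there is always $3$-regular colorable (one color for the $R$-copies, one for the $B$-copies other than the root, and a third for the root $(B,c_{\textit{init}})$), independently of whether $\Reach$ is regular; the paper makes the same point about \Cref{ex:tree-not-2-reg-colorable} in its discussion. So attaching $a^nb^n$-counters in the style of \Cref{lem:reachable-regularity} gives no control over colorings with three or more colors: a finite monoid need not ``compute'' the non-regular set $\Reach$; it only needs, edge by edge, some finite-index invariant separating the two endpoints, and in these constructions every edge joins words that differ in simple regular ways (a $B$ tag versus an $R$ tag, the root versus non-roots, one simulation step). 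Your proposal acknowledges this tension but offers no mechanism that resolves it, so the reduction has no correctness argument even in outline. As it stands, this is a reasonable research program --- consistent with the authors' belief that the answer is undecidability --- but not a proof.
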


Beyond its decidability status, the structural properties of "regular colorability" evades us:

\begin{conjecture}
    Over "automatic graphs", the following notions are pairwise disjoint:
    \begin{enumerate}
        \item to be finitely "regular colorable",
        \item to be finitely colorable,
        \item not to contain unbounded cliques.
    \end{enumerate}
\end{conjecture}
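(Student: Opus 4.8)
The three properties form a chain: every finitely "regular colorable" graph is trivially finitely colorable, and every finitely colorable graph has bounded clique size, since a proper $k$-colouring forbids a clique on $k+1$ vertices. I therefore read the statement as asserting that these two inclusions are \emph{strict}, so that the three classes of "automatic graphs" are pairwise distinct; all the content lies in producing two separating witnesses, namely an "automatic graph" with no unbounded cliques that is not finitely colorable, and an "automatic graph" that is finitely colorable but not finitely "regular colorable".

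For the gap between items (3) and (2), the plan is to realise, as a single "automatic graph", one of the classical families of triangle-free graphs of unbounded chromatic number --- shift graphs or iterated Mycielskians. Such a graph has clique number $2$, hence satisfies (3), while containing subgraphs of chromatic number $n$ for every $n$, hence violating (2). The substance of this direction is \emph{not} the combinatorics but the encoding: the defining adjacency of a shift graph compares the second coordinate of one vertex with the first coordinate of another, and these coordinates sit at incompatible offsets once vertices are written as words, so a naive unary encoding is not synchronous. I expect that obtaining an "automatic" edge relation for such a family --- or proving that none exists --- is already a delicate point, and possibly the reason the separation of (2) from (3) is nonobvious.

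The genuinely hard direction is the strictness of $(1)\subseteq(2)$: a finitely colorable "automatic graph" admitting \emph{no} finite "regular coloring" at all. Naive candidates collapse. The "2-colorable" tree $\+T$ of \Cref{ex:tree-not-2-reg-colorable} is not "2-regular colorable", yet I would observe that it \emph{is} "3-regular colorable": colouring $\varepsilon$ with its own colour, every $a^pb^q$ with $p$ odd with a second colour, and all remaining vertices with a third colour is proper and has regular colour classes $\{\varepsilon\}$, $a(aa)^*b^*$ and $aa(aa)^*b^* \cup bb^*$. Thus adding a single colour restores regularity, and the same slack defeats the trees produced by the reduction of \Cref{thm:k-reg-col-undec}. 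The structural reason is instructive: to \emph{remove} this slack and force every proper colouring to follow a fixed non-regular partition, that reduction uses a growing clique to absorb the surplus colours --- but an unbounded clique immediately expels the graph from class (2). Hence any witness for $(2)\setminus(1)$ must rigidify colourings \emph{without} large cliques.

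My proposed approach for this witness is to combine the "wf-RTM" encoding of \Cref{lem:reachable-regularity}, whose set $\Reach$ of "reachable configurations" can be made non-regular, with bounded-degree colour-synchronising gadgets that pin the colours of designated configuration-vertices to a common value without ever creating a triangle; the aim is that in \emph{every} $k$-colouring the induced partition of the configuration vertices is forced --- up to a finite, regularly-correctable ambiguity --- to expose membership in $\Reach$, so that non-regularity of $\Reach$ rules out every finite "regular coloring" while the graph remains "2-colorable". The main obstacle, and the reason the statement is stated as a conjecture, is precisely the construction of such a rigidity mechanism: one must keep the graph finitely colorable and "automatic" while defeating the extra freedom that additional colours provide \emph{uniformly in the number of colours}, and in the absence of a large clique I know of no gadget that forces colour equality across unboundedly many vertices. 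Designing one, or proving that colourings cannot be so constrained, is the crux of the conjecture.
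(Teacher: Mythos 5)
This statement is one of the paper's open conjectures: the paper gives no proof of it, only the remarks that follow it, so the only fair comparison is against those remarks---and your proposal matches them closely while claiming no more than can actually be proven. Like the paper, you establish exactly the trivial part (the chain of implications, which reduces pairwise distinctness to strictness of the two inclusions), you recognize that the tree of \Cref{ex:tree-not-2-reg-colorable} cannot separate (2) from (1) because a third color restores regularity, and you recognize that separating (3) from (2) requires a triangle-free automatic graph of unbounded chromatic number, which is exactly where the paper appeals to---and doubts the automaticity of---the classical constructions. Your conclusion that the two witnesses are currently out of reach is the correct assessment, not a gap in your argument.

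Two further remarks. First, your verification is in one place more careful than the paper's own: coloring $a^pb^q$ by the parity of $p-q$, as the paper's parenthetical suggests, is \emph{improper}, since $p-q$ is invariant along every edge $(a^pb^q,a^{p+1}b^{q+1})$; the parity of $p$, which is what you chose, is the correct fix, and your three color classes are indeed regular. Your diagnosis of why the proof of \Cref{thm:k-reg-col-undec} cannot produce a witness for (2) versus (1)---its graphs are always 3-regular colorable (root, blue copies, red copies), so rigidity there is bought only with cliques, which are forbidden in class (2)---is likewise correct and is precisely why that theorem leaves the conjecture open. Second, your proposed witnesses go beyond anything in the paper, and on the shift-graph side the encoding obstacle you raise may be less severe than you fear: if the vertex $(i,j)$ is encoded as the convolution of the binary representations of $i$ and $j$, then the shift adjacency (the second coordinate of $u$ equals the first coordinate of $v$) becomes a position-wise comparison of the second track of $u$ with the first track of $v$, which is synchronous; working this out in detail looks like a genuinely promising route to separating (2) from (3).
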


Note that the implications (1) $\Rightarrow$ (2) $\Rightarrow$ (3) trivially hold.
Moreover, recall that while the "automatic tree" of \Cref{ex:tree-not-2-reg-colorable}
is not "2-regular colorable", it is "3-regular colorable" (it suffices to color $\varepsilon$ with a new color, and then color $a^p b^q$ by looking at the parity of $p-q$). Hence,
it does not prove that (2) $\not\Rightarrow$ (1).
Likewise, on arbitrary infinite graphs, we know that
there exists triangle-free graphs that are not finitely
colorable \cite{ungar_descartes54chromatic}---but we believe these
graphs not to be "automatic", and hence they would not prove that (3) $\not\Rightarrow$ (2).

Finally, observe that it is decidable to test whether an "automatic graph" has \emph{infinite} cliques \cite[Corollary 5.5]{KL10}. We conjecture that this property generalizes to unbounded cliques.

\begin{conjecture}
    The problem of whether an "automatic graph" has bounded cliques is decidable.
\end{conjecture}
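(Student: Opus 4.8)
The plan is to decide the complementary property: whether the "automatic graph" $\GraphR$ has \emph{unbounded} cliques, that is, cliques of arbitrarily large finite size, since having bounded cliques is exactly the negation of this. The starting point is the standard observation that, for each \emph{fixed} $n$, the property ``$\GraphR$ contains a clique of size $n$'' can be tested effectively: it is a first-order property of the automatic structure $(\A^*, R)$, so one builds a synchronous automaton for the corresponding $n$-ary relation and checks it for non-emptiness. The entire difficulty is the quantification over $n$, which forbids us from merely testing every value in turn.

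The heart of the strategy is a pumping lemma producing a \emph{computable} threshold $N$, depending only on the number of states of an automaton recognising $\lotimes R$, such that $\GraphR$ has unbounded cliques if and only if it has a clique of size $N$. One direction is trivial. For the converse, take a clique $\{w_1, \dots, w_N\}$ ordered so that $|w_1| \le \dots \le |w_N|$. For each pair $i < j$, split the longer word at the length of the shorter one and record the run of the automaton on $w_i \otimes w_j$ across its two phases: the synchronised prefix, and the $\bot$-padded suffix. Colouring each pair by the tuple of states seen at these synchronisation points, Ramsey's theorem yields, once $N$ is large enough, a sizeable sub-clique whose pairwise synchronisation behaviour is uniform. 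I would then exploit this uniformity to pump: the shared automaton loops let one insert a common pumpable factor into the words (or duplicate a word along a loop) while preserving every pairwise edge, producing a strictly larger clique, and iterating yields cliques of every size.

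Assembling the decision procedure then splits cliques into two regimes. Those drawn entirely from words shorter than the threshold are finite in number, so their contribution to the clique size is bounded and computable by brute force; cliques meeting arbitrarily long words are governed by the pumping lemma. It may further be convenient to call on the decidability of the existence of an \emph{infinite} clique in an "automatic graph" \cite[Corollary 5.5]{KL10}, both to guide the Ramsey colouring and to absorb the case in which pumping yields an infinite clique directly.

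The step I expect to be the main obstacle is the simultaneous edge-preservation of the pumping. In ordinary pumping one alters a single word, but enlarging a clique demands a transformation that preserves all $\binom{N}{2}$ pairwise edges at once while creating new vertices. Since $w_i \otimes w_j$ feels the \emph{difference} $|w_j| - |w_i|$ through its $\bot$-padding, pumping a loop inside one word shifts its synchronisation offset with respect to every other word of the clique. Making the Ramsey colouring fine enough that one pumping operation respects all these offsets simultaneously, yet coarse enough that a computable clique size $N$ still forces a monochromatic witness, is the delicate core of the argument and where I expect the real work to lie.
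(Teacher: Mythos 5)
This statement is one of the paper's closing \emph{conjectures}: the paper gives no proof of it, and the whole point of the Discussion section is that the problem is open. So there is no "paper proof" to match your argument against, and your proposal must stand on its own --- which it does not, because everything hinges on the pumping lemma with a computable threshold $N$, which you hypothesize rather than prove. The surrounding scaffolding is fine and essentially free: for each fixed $n$, the existence of an $n$-clique is a first-order property of an automatic structure, hence decidable, and a Ramsey colouring of pairs by the states at the synchronisation points of the runs is routine. But the threshold lemma is not a stepping stone towards the conjecture; it is \emph{equivalent} to it. (If boundedness were decidable, one could compute the maximum clique size of any bounded instance by testing $n = 1, 2, 3, \dots$ until the first "no", and hence compute a valid threshold as a function of the automaton size; conversely a computable threshold immediately gives decidability.) So the proposal, as written, reduces the conjecture to a restatement of itself, and you flag yourself that the "delicate core" --- making one pumping operation preserve all pairwise edges --- is exactly what you do not know how to do.

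Here is, concretely, why the pumping step as sketched breaks. Ramsey uniformity constrains only the runs on pairs of \emph{existing} clique words. Pumping loops in the padded tails (say, in the product of the runs of the longest word against all shorter ones, which has $m^{N-1}$ states) can manufacture new words adjacent to every \emph{old} word; but enlarging a clique by $m$ vertices requires $\binom{m}{2}$ additional edges among the \emph{new} words, and nothing in the colouring or in the pumped runs certifies those: they are fresh accepting runs that the construction never sees. Worse, padding-based pumping can have no purchase at all. Over $\mathbb{A} = \{a,b\}$, consider the automatic graph with edge set $E = \{(u,v) : |u| = |v|,\ u \neq v\}$. It has unbounded cliques (all $2^n$ words of length $n$, for each $n$) but no infinite clique, and every edge has empty padding, so the $\bot$-tail mechanism your argument leans on is vacuous. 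Moreover, when a clique saturates a length level, no vertex whatsoever can be added to it: larger cliques live at larger lengths and are disjoint from it, so clique growth in this graph is a global reconstruction, not a local modification of a given clique. This same example shows that the infinite-clique decidability result of \cite{KL10} cannot "absorb" the hard case, since unboundedness and the existence of an infinite clique genuinely differ --- indeed, the entire difficulty of the conjecture lives in that difference.
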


\diego{What about a fixed alphabet? Do all results hold already for a binary alphabet?}
\remi{I think that "wf-RTM" with 2 letters and one tape are as expressive as Turing machines.
A configuration can be encoded using 3 letters (2 letters for the tape content, the same letters encode the current state, but we might need a special symbol for separating them).
Hence, I think that \Cref{thm:k-reg-col-undec} already holds for only a fixed alphabet (3 letters using this argument, maybe we can do better).}

\bibliographystyle{plainurl}
\bibliography{long,biblio,biblio1,references1}

\begin{thebibliography}{10}

\bibitem{thispaperMFCS}
Pablo Barcel\'o, Diego Figueira, and R\'emi Morvan.
\newblock Separating automatic relations.
\newblock In {\em 48th International Symposium on Mathematical Foundations of
  Computer Science (MFCS 2023)}. Leibniz International Proceedings in
  Informatics (LIPIcs), 2023.

\bibitem{ICALP19}
Pablo Barcel{\'{o}}, Chih{-}Duo Hong, Xuan~Bach Le, Anthony~W. Lin, and Reino
  Niskanen.
\newblock Monadic decomposability of regular relations.
\newblock In {\em International Colloquium on Automata, Languages and
  Programming (ICALP)}, pages 103:1--103:14, 2019.
\newblock \href {https://doi.org/10.4230/LIPIcs.ICALP.2019.103}
  {\path{doi:10.4230/LIPIcs.ICALP.2019.103}}.

\bibitem{BarceloLLW12}
Pablo Barcel{\'o}, Leonid Libkin, Anthony~Widjaja Lin, and Peter~T. Wood.
\newblock Expressive languages for path queries over graph-structured data.
\newblock {\em ACM Transactions on Database Systems (TODS)}, 37(4):31, 2012.
\newblock \href {https://doi.org/10.1145/2389241.2389250}
  {\path{doi:10.1145/2389241.2389250}}.

\bibitem{BLSS03}
Michael Benedikt, Leonid Libkin, Thomas Schwentick, and Luc Segoufin.
\newblock Definable relations and first-order query languages over strings.
\newblock {\em Journal of the ACM}, 50(5):694--751, 2003.
\newblock \href {https://doi.org/10.1145/876638.876642}
  {\path{doi:10.1145/876638.876642}}.

\bibitem{BGLZ22}
Pascal Bergstr{\"{a}}{\ss}er, Moses Ganardi, Anthony~W. Lin, and Georg
  Zetzsche.
\newblock Ramsey quantifiers over automatic structures: Complexity and
  applications to verification.
\newblock In Christel Baier and Dana Fisman, editors, {\em Annual Symposium on
  Logic in Computer Science (LICS)}, pages 28:1--28:14. {ACM}, 2022.
\newblock \href {https://doi.org/10.1145/3531130.3533346}
  {\path{doi:10.1145/3531130.3533346}}.

\bibitem{Berstel}
Jean Berstel.
\newblock {\em Transductions and Context-Free Languages}.
\newblock Teubner-Verlag, 1979.

\bibitem{blumensath2023MSO}
Achim Blumensath.
\newblock {Monadic Second-Order Model Theory}.
\newblock Preprint of a book., 2023.
\newblock Version of 2023-02-25.
\newblock URL: \url{https://www.fi.muni.cz/~blumens/MSO.pdf}.

\bibitem{BG00}
Achim Blumensath and Erich Gr{\"{a}}del.
\newblock Automatic structures.
\newblock In {\em Annual Symposium on Logic in Computer Science (LICS)}, pages
  51--62. {IEEE} Computer Society, 2000.

\bibitem{CCG06}
Olivier Carton, Christian Choffrut, and Serge Grigorieff.
\newblock Decision problems among the main subfamilies of rational relations.
\newblock {\em {RAIRO} -- Theoretical Informatics and Applications},
  40(2):255--275, 2006.
\newblock \href {https://doi.org/10.1051/ita:2006005}
  {\path{doi:10.1051/ita:2006005}}.

\bibitem{Choffrut-survey}
Christian Choffrut.
\newblock Relations over words and logic: {A} chronology.
\newblock {\em Bull. of the EATCS}, 89:159--163, 2006.

\bibitem{CCLP17}
Lorenzo Clemente, Wojciech Czerwi{\'n}ski, S{\l}awomir Lasota, and Charles
  Paperman.
\newblock Regular separability of {P}arikh automata.
\newblock In {\em ICALP}, pages 117:1--117:13, 2017.
\newblock \href {https://doi.org/10.4230/LIPIcs.ICALP.2017.117}
  {\path{doi:10.4230/LIPIcs.ICALP.2017.117}}.

\bibitem{CMRZZ17}
Wojciech Czerwi{\'n}ski, Wim Martens, Lorijn van Rooijen, Marc Zeitoun, and
  Georg Zetzsche.
\newblock A characterization for decidable separability by piecewise testable
  languages.
\newblock {\em Discret. Math. Theor. Comput. Sci.}, 19(4), 2017.

\bibitem{EM}
Calvin~C. Elgot and Jorge~E. Mezei.
\newblock On relations defined by generalized finite automata.
\newblock {\em IBM J. Res. Dev.}, 9(1):47--68, 1965.
\newblock \href {https://doi.org/10.1147/rd.91.0047}
  {\path{doi:10.1147/rd.91.0047}}.

\bibitem{FS93}
Christiane Frougny and Jacques Sakarovitch.
\newblock Synchronized rational relations of finite and infinite words.
\newblock {\em Theor. Comput. Sci.}, 108(1):45--82, 1993.
\newblock \href {https://doi.org/10.1016/0304-3975(93)90230-Q}
  {\path{doi:10.1016/0304-3975(93)90230-Q}}.

\bibitem{IKR02}
Hajime Ishihara, Bakhadyr Khoussainov, and Sasha Rubin.
\newblock Some results on automatic structures.
\newblock In {\em Annual Symposium on Logic in Computer Science (LICS)}, page
  235. {IEEE} Computer Society, 2002.
\newblock \href {https://doi.org/10.1007/11690634_22}
  {\path{doi:10.1007/11690634_22}}.

\bibitem{Kocher}
Chris K\"ocher.
\newblock {\em Analyse der Entscheidbarkeit diverser Probleme in automatischen
  Graphen}.
\newblock Unpublished manuscript, 2014.
\newblock URL:
  \url{https://people.mpi-sws.org/~ckoecher/files/theses/bsc-thesis.pdf}.

\bibitem{K16}
Eryk Kopczy{\'n}ski.
\newblock Invisible pushdown languages.
\newblock In {\em Annual Symposium on Logic in Computer Science (LICS)}, pages
  867--872, 2016.
\newblock \href {https://doi.org/10.1145/2933575.2933579}
  {\path{doi:10.1145/2933575.2933579}}.

\bibitem{DBLP:conf/ifipTCS/KuskeL08}
Dietrich Kuske and Markus Lohrey.
\newblock Hamiltonicity of automatic graphs.
\newblock In Giorgio Ausiello, Juhani Karhum{\"{a}}ki, Giancarlo Mauri, and
  C.{-}H.~Luke Ong, editors, {\em IFIP}, volume 273, pages 445--459. Springer,
  2008.

\bibitem{KL10}
Dietrich Kuske and Markus Lohrey.
\newblock Some natural decision problems in automatic graphs.
\newblock {\em J. Symb. Log.}, 75(2):678--710, 2010.
\newblock \href {https://doi.org/10.2178/jsl/1268917499}
  {\path{doi:10.2178/jsl/1268917499}}.

\bibitem{lecerf1963machines}
Yves Lecerf.
\newblock Machines de turing réversibles. récursive insolubilité en $n \in
  \mathbf{N}$ de l'équation $u = \theta^n u$, ou $\theta$ est un
  « isomorphisme de codes ».
\newblock {\em Comptes rendus hebdomadaires des séances de l'Académie des
  sciences}, 257:2597--2600, 1963.

\bibitem{LB16}
Anthony~W. Lin and Pablo Barcel\'{o}.
\newblock String solving with word equations and transducers: Towards a logic
  for analysing mutation {XSS}.
\newblock In {\em Annual Symposium on Principles of Programming Languages
  (POPL)}, pages 123--136. {ACM}, 2016.
\newblock \href {https://doi.org/10.1145/2837614.2837641}
  {\path{doi:10.1145/2837614.2837641}}.

\bibitem{DBLP:journals/dmtcs/LodingS19}
Christof L{\"{o}}ding and Christopher Spinrath.
\newblock Decision problems for subclasses of rational relations over finite
  and infinite words.
\newblock {\em Discret. Math. Theor. Comput. Sci.}, 21(3), 2019.
\newblock \href {https://doi.org/10.23638/DMTCS-21-3-4}
  {\path{doi:10.23638/DMTCS-21-3-4}}.

\bibitem{Morita2017}
Kenichi Morita.
\newblock {\em Reversible Turing Machines}, pages 103--156.
\newblock Springer Japan, Tokyo, 2017.
\newblock \href {https://doi.org/10.1007/978-4-431-56606-9_5}
  {\path{doi:10.1007/978-4-431-56606-9_5}}.

\bibitem{Nivat}
Maurice Nivat.
\newblock Transduction des langages de {C}homsky.
\newblock {\em Ann. Inst. Fourier}, 18:339--455, 1968.

\bibitem{PZ14}
Thomas Place and Marc Zeitoun.
\newblock Separating regular languages with first-order logic.
\newblock {\em Logical Methods in Computer Science (LMCS)}, 12(1), 2016.
\newblock \href {https://doi.org/10.2168/LMCS-12(1:5)2016}
  {\path{doi:10.2168/LMCS-12(1:5)2016}}.

\bibitem{Stearns}
Richard~Edwin Stearns.
\newblock A regularity test for pushdown machines.
\newblock {\em Information and Control}, 11(3):323--340, 1967.
\newblock \href {https://doi.org/10.1016/S0019-9958(67)90591-8}
  {\path{doi:10.1016/S0019-9958(67)90591-8}}.

\bibitem{ungar_descartes54chromatic}
Peter Ungar and Blanche Descartes.
\newblock $k$-{C}hromatic graphs without triangles.
\newblock {\em The American Mathematical Monthly}, 61(5):352--353, 1954.
\newblock \href {https://doi.org/10.2307/2307489} {\path{doi:10.2307/2307489}}.

\bibitem{Valiant75}
Leslie~G. Valiant.
\newblock Regularity and related problems for deterministic pushdown automata.
\newblock {\em Journal of the {ACM}}, 22(1):1--10, 1975.
\newblock \href {https://doi.org/10.1145/321864.321865}
  {\path{doi:10.1145/321864.321865}}.

\end{thebibliography}

\ifarxiv
    \appendix
    \clearpage
    \section{Missing Proofs}
\label{apdx-sec:missing-proofs}

\begin{proofappendix}{lem:incomp-is-automatic}{\incompisautomatic}
    By definition, the "incompatibility relation" $\incompGraph{R_1}{R_2}$ can be written as
	$R_{\neg\compL} \cup R_{\neg\compLpr} \cup R_{\neg\compR} \cup R_{\neg\compRpr}$, where:
	\begin{align*}
        R_{\neg\compL} &\defeq \big\{ (u,u') \in \A^* \times \A^* \;\big\vert\; \exists v \in \A^*,\; (u,v) \in R_1 \land (u',v) \in R_2 \big\} \text{,}\\
        R_{\neg\compLpr} &\defeq \big\{ (u,u') \in \A^* \times \A^* \;\big\vert\; \exists v \in \A^*,\; (u',v) \in R_1 \land (u,v) \in R_2 \big\} \text{,}\\
        R_{\neg\compR} &\defeq \big\{ (u,u') \in \A^* \times \A^* \;\big\vert\; \exists v \in \A^*,\; (v,u) \in R_1 \land (v,u') \in R_2 \big\} \text{, and}\\
        R_{\neg\compRpr} &\defeq \big\{ (u,u') \in \A^* \times \A^* \;\big\vert\; \exists v \in \A^*,\; (v,u') \in R_1 \land (v,u) \in R_2 \big\}
    \end{align*}
    Observe that starting from automata for $R_1$ and $R_2$, then for each
	of the relation $R_{\neg\compL}$, $R_{\neg\compLpr}$, $R_{\neg\compR}$ or $R_{\neg\compRpr}$, we can build an automaton recognizing them
	using a product construction, which can be implemented in polynomial time.
    It then follows that we can build a polynomial automaton recognizing
    $\incompGraph{R_1}{R_2}$.
\end{proofappendix}

\begin{proofappendix}{lem:reachable-regularity}{\reachableregularity}
    By reduction from the halting problem for deterministic and "reversible" TMs, which is undecidable by \Cref{prop:halting-problem-detrevTM}. Given a deterministic and "reversible" TM $T$ (running on the empty input), consider the TM $T'$ where every time there is a transition $(u, p, v) \to (u', q, v')$ from configuration $c$ to configuration $c'$ in $T$,
	simulate this transition in $T'$---$a$'s should be treated as blank symbols---,
	and then rewrite $a^n b^n$ into $a^{n+1}b^{n+1}$.
	When $T$ writes on a blank symbol that was actually a $a$ in $T'$,
	we must also add an extra $a$ (to account for the one that was overwritten):
	this case is depicted \Cref{fig:reachable-regularity}.
	Moreover, when $T$ deletes a symbol at the end of the tape,
	we must shift the $a^n b^n$ prefix. This can be done by replacing the blank
	with an $a$, the last $a$ with a $b$, and deleting the last $b$.
	
    Observe that $T'$ is a "wf-RTM":
    \begin{enumerate}
        \item the "initial configuration" $(\bot,q_0,\bot)$ has no predecessor;
        \item it is deterministic and co-deterministic:
        \begin{itemize}
            \item every configuration inside a path $(u, q, v a^n b^n) \xrightarrow{*} (u, q, v a^{n+1} b^{n+1})$ 
            has, by definition, exactly in- and out-degree one;
            \item every configuration of the form $(u, p, v a^n b^n)$ has as many predecessors [resp.  
                successors] in $T'$ as $(u,q,v)$ in $T$, namely one since $T$ was assumed to be
                deterministic and "reversible";
        \end{itemize}
        \item it has no infinite descending chain since $\N$ is well-founded.
    \end{enumerate}
    Moreover $T'$ has no cycle,
    so if $T$ is halting (on an empty input) then the set of "reachable configurations" of $T'$ is finite (since it is a "wf-RTM") and thus regular. If $T$ is not halting, the set of "reachable configurations" of $T'$ is infinite and its projection onto $\set{a,b}$ is an infinite set of words of the form $a^{n} b^{n'}$ where $n-2 \leq n' \leq n+2$. Hence, since regular languages are closed under homomorphic images, the "reachable configurations" of $T'$ cannot be regular.
\end{proofappendix}
    \section{Incompatibility Graph}
\label{apdx-sec:incompatibility}

\begin{example}
   \AP\label{ex:equal_length_plusone}
   Let $\A = \{a,b\}$, $R_1$ be the equal-length relation,
   and
   \[
       R_2 = \{(u, ua) \mid u \in \A^*\} \cup \{(u, ub) \mid u \in \A^*\}.
   \]
   Then, $u$ is "incompatible" with $u'$ if $|u| = |u'|+1$ (this is given by \compL~or \compR),
   or if $|u'| = |u|+1$ (this is given by \compLpr~or \compRpr).
   $R_2$ and the "incompatibility graph" are depicted in \Cref{fig:equal_length_plusone}.
   
   Note that $R_1$ and $R_2$ are separable by the "recognizable" relation
   $S$ consisting of all pairs $(u,v)$ such that $|u|$ and $|v|$ have the same parity.
   Moreover, $\incompGraph{R_1}{R_2}$ is "2-regular colorable", the two colors being
   the words of even and odd length.
\end{example}

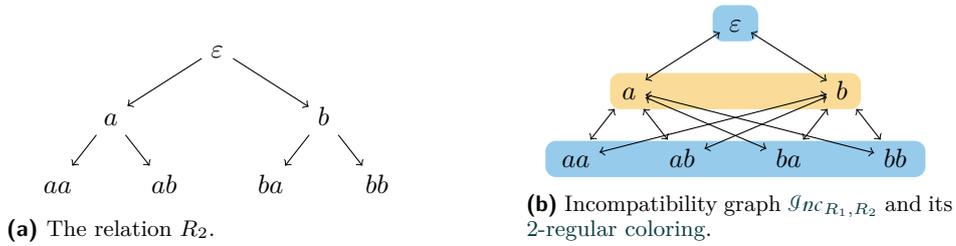
\begin{figure}[htb]
   \centering%
   \begin{subfigure}[b]{.4\textwidth}
       \centering
       \begin{tikzpicture}[>={Classical TikZ Rightarrow},
	level distance=.9cm,
	level 1/.style={sibling distance=2.8cm},
	level 2/.style={sibling distance=1.4cm},
	level 3/.style={sibling distance=.7cm},
	edge from parent/.style={draw,->}]
	\node (t) {$\varepsilon\vphantom{b}$}
		child {node {$a\vphantom{b}$}
			child {node {$aa\vphantom{b}$}
			}
			child {node {$ab$}
			}
		}
		child {node {$b$}
			child {node {$ba$}
			}
			child {node {$bb$}
			}
		};
\end{tikzpicture}
       \caption{The relation $R_2$.}
   \end{subfigure}
   \hspace{1cm}
   \begin{subfigure}[b]{.4\textwidth}
       \centering
       \begin{tikzpicture}[>={Classical TikZ Rightarrow},
	level distance=.9cm,
	level 1/.style={sibling distance=2.8cm},
	level 2/.style={sibling distance=1.4cm},
	level 3/.style={sibling distance=.7cm},
	edge from parent/.style={}]
	\fill[rounded corners, fill=cBlue, opacity=.5]
		(-2.5,-2.05) rectangle (2.5,-1.57)
		(-.3,-0.25) rectangle (.3,0.23);
	\fill[rounded corners, fill=cYellow, opacity=.5]
		(-1.65,-1.15) rectangle (1.65,-0.67);
	\node (eps) {$\varepsilon\vphantom{b}$}
		child {node (a) {$a\vphantom{b}$}
			child {node (aa) {$aa\vphantom{b}$}
			}
			child {node (ab) {$ab$}
			}
		}
		child {node (b) {$b$}
			child {node (ba) {$ba$}
			}
			child {node (bb) {$bb$}
			}
		};
	\draw[<->] (eps) edge (a)
		(eps) edge (b)
		(a) edge (aa)
		(a) edge (ab)
		(a) edge (ba)
		(a) edge (bb)
		(b) edge (aa)
		(b) edge (ab)
		(b) edge (ba)
		(b) edge (bb);
\end{tikzpicture}
       \caption{Incompatibility graph $\incompGraph{R_1}{R_2}$ and its "2-regular coloring".}
   \end{subfigure}
   \caption{\AP\label{fig:equal_length_plusone}Automatic graphs of \Cref{ex:equal_length_plusone}, restricted to words of length at most 2.}
\end{figure}
\fi

\end{document}